\newcommand{\ra}[1]{\renewcommand{\arraystretch}{#1}}
\renewcommand{\thefootnote}{\fnsymbol{footnote}}
\newtheorem{theorem}{Theorem}
\newtheorem*{theorem*}{Theorem}
\newtheorem{lemma}[theorem]{Lemma}
\newtheorem*{lemma*}{Lemma}
\newtheorem{proposition}[theorem]{Proposition}
\theoremstyle{definition}
\newtheorem{definition}{Definition}
\newtheorem*{definition*}{Definition}
\theoremstyle{remark}
\newtheorem{remark}{Remark}
\numberwithin{equation}{section}
\newcommand{\secref}[1]{Section~\ref{#1}}
\newcommand{\thmref}[1]{Theorem~\ref{#1}}
\newcommand{\lemref}[1]{Lemma~\ref{#1}}
\newcommand{\propref}[1]{Proposition~\ref{#1}}
\newcommand{\defnref}[1]{Definition~\ref{#1}}
\newcommand{\tabref}[1]{Table~\ref{#1}}
\newcommand{\prob}[1]{\ensuremath{\text{{Pr}$\left[#1\right]$}}}
\newcommand{\expct}[1]{\ensuremath{\text{{E}$\left[#1\right]$}}}
\newcommand{\var}[1]{\ensuremath{\text{{Var}$\left[#1\right]$}}}
\newcommand{\cov}[1]{\ensuremath{\text{{Cov}$\left[#1\right]$}}}
\begin{document}

\title{Understanding the hardness of approximate query processing with joins}

\author{Tianyu Liu$^1$}
\email{tl@cs.wisc.edu}
\address{$^1$University of Wisconsin--Madison, Madison, WI, USA}

\author{Chi Wang$^2$}
\email{wang.chi@microsoft.com}
\address{$^2$Microsoft Research, Redmond, WA, USA}

\begin{abstract}
We study the hardness of Approximate Query Processing (AQP) of various types of queries involving joins over multiple tables of possibly different sizes.
In the case where the query result is a single value (e.g., \texttt{COUNT}, \texttt{SUM}, and \texttt{COUNT(DISTINCT)}), 
we prove worst-case information-theoretic lower bounds for AQP problems that are given parameters $\epsilon$ and $\delta$, and return estimated results within a factor of $1 + \epsilon$ of the true results with error probability at most $\delta$.
In particular, the lower bounds for cardinality estimation over joins under various settings are contained in our results.
Informally, our results show that for various database queries with joins, unless restricted to the set of queries whose results are always guaranteed to be above a very large threshold, 
the amount of information an AQP algorithm needs for returning an accurate approximation is at least linear in the number of rows in the largest table.
Similar lower bounds even hold for some special cases where additional information such as top-$K$ heavy hitters and all frequency vectors are available.
In the case of \texttt{GROUP-BY} where the query result is not a single number, we study the lower bound for the amount of information used by any approximation algorithm that does not report any non-existing group and does not miss groups of large total size.
Our work extends the work of Alon, Gibbons, Matias, and Szegedy~\cite{Alon:1999:TJS:303976.303978}.

We compare our lower bounds with the amount of information required by Bernoulli sampling to give an accurate approximation.
For \texttt{COUNT} queries with joins over multiple tables of the same size, the upper bound matches the lower bound, unless the problem setting is restricted to the set of queries whose results are always guaranteed to be above a very large threshold. 
\end{abstract}

\maketitle


\let\thefootnote\relax\footnotetext{Research was done when Tianyu Liu was visiting Microsoft Research.}

\section{Introduction}
The database community has been working for more than 20 years on different techniques for Approximate Query Processing (AQP) that aim to deliver an estimate of the query result efficiently~\cite{10.1145/971697.602294, Olken93randomsampling}. In the past, different techniques for AQP have been proposed including approaches that leverage pre-computed samples or synopses~\cite{10.5555/1315451.1315455} as well as techniques that sample from the underlying data at query runtime (e.g. \cite{10.1145/93605.93611, 10.1006/jcss.1996.0041, 10.1145/335168.335230}). However, all the existing AQP approaches suffer from various limitations that restrict the applicability to support the ad-hoc exploration of a new data set.
In fact, even after decades of research, AQP remains largely confined to academic research and is not a well-established paradigm in today's products and services~\cite{Chaudhuri2017AQP}.

Among all the reasons that AQP is not widely adopted, an important one is that most AQP systems are not able to offer \emph{a priori} (i.e., before the query is executed by the AQP system) accuracy guarantee for arbitrary database queries. In other words, although certain AQP systems can provide accurate estimate efficiently for some queries, they could possibly return very bad estimate or take much longer time for other queries. As pointed out by \cite{Chaudhuri2017AQP}, "it seems impossible to
have an AQP system that supports the richness of SQL with significant
saving of work while providing an accuracy guarantee that
is acceptable to a broad set of application workloads."

Together with the unsatisfactory development of AQP systems that can provide guaranteed error bounds is the lack of understanding toward whether such systems are even possible.
Except for very few research papers, e.g. \cite{Alon:1999:TJS:303976.303978}, not much is known about the lower bound, i.e., how much information we need to know ahead to give an estimate for common database queries with given accuracy requirement.

In this paper, we try to answer this question mainly for various common database queries involving joins which return a single value (e.g., \texttt{COUNT}).
This is done by proving worst-case information-theoretic lower bounds for AQP problems that are given parameters $\epsilon$ and $\delta$, and return estimated results within a factor of $1 + \epsilon$ of the true results with error probability at most $\delta$ (see \defnref{defn:approx}). 
Our work extends the work of Alon, Gibbons, Matias, and Szegedy~\cite{Alon:1999:TJS:303976.303978}.
Compared with \cite{Alon:1999:TJS:303976.303978} which only considered the lower bound of join size estimation (essentially only \texttt{COUNT} queries) over two tables of the same size with a constant relative error and a constant error probability, we show lower bounds for a much broader range of query types involving joins, including \texttt{COUNT}, \texttt{SUM}, \texttt{COUNT(DISTINCT)}, \texttt{GROUP-BY}, etc. over multiple tables of possibly different sizes.
In the case where the return value is a single value (e.g., \texttt{COUNT}, \texttt{SUM}, and \texttt{COUNT(DISTINCT)}), our results allow relative error parameter $\epsilon$ and error probability $\delta$.
In particular, the lower bounds for cardinality estimation over joins under various settings are contained in our results.
In the case of \texttt{GROUP-BY} where the return value is not a single number, we study the lower bound for the amount of information used by any approximation algorithm that does not report any non-existing group and does not miss groups of large total size.
For most type of queries, we prove lower bounds for queries without selection.
Nevertheless, the lower bounds still hold when \texttt{WHERE} clauses are allowed, as the existence of selection does not make the problems easier.
For PK-FK joins, we assume selection is present because when no selection is applied on the query, the result of query function such as \texttt{COUNT} only depends on the fact table, and hence lower bounds for a single relation (without join) holds~\cite{Ding:2016:SSA:2882903.2915249}.

As the theorems we prove in this paper are information-theoretic lower bounds, we use the abstract notions that an algorithmic scheme $\Phi$ extracts information from each of the database relations (oblivious to other database relations) as bit strings, and another function $D$ computes the query results from the the bit strings.
Our theorems apply to (but are not restricted to) common AQP implementations of $\Phi$ such as computing histograms or drawing samples of database relations without information from other relations.
In addition to algorithmic schemes that can only look at each relation independently,
there are research works in the AQP literature which assume additional information such as top-K heavy hitters and frequency vectors of each table are available (e.g. \cite{Estan:2006:ESJ:1129754.1129881, Vengerov:2015:JSE:2824032.2824051, Chen:2017:TSJ:3035918.3035921}).
In this paper, we also consider such situations.

Since in most types of queries we study (except for \texttt{GROUP-BY}), our notion of approximation is to find a value within a small factor $1+\epsilon$ of the true value, it is extremely hard if the true value itself is very small. For example, in the case of \texttt{COUNT}, if the true answer is just one, i.e. there is exact one database record satisfying the query, then any approximation algorithm with small $\epsilon$ has to find this record.
This makes the problem as hard as finding \texttt{MAX} or \texttt{MIN} which requires looking at the full table.
Therefore, in our setting we assume there is a given value $B$ such that the answer for any query is above $B$.
Intuitively, the problem is more difficult when $B$ is smaller.

Informally, our results show that for various database queries with joins (except for PK-FK joins), unless restricted to the set of queries whose results are always guaranteed to be above a very large $B$, the amount of information an AQP algorithm needs for returning an $(\epsilon, \delta)$-approximation (\defnref{defn:approx}) is at least linear in the number of rows in the largest table.
Similar lower bounds even hold for some special cases where additional information such as top-K heavy hitters and all frequency vectors are available.
Our lower bounds hold even when each database relation has only two columns.
As AQP systems only need the information in the joining columns as well as the columns where query functions (e.g. \texttt{SUM} or \texttt{COUNT(DISTINCT)}) apply, other columns can be discarded.
In this paper, we use interchangeably the number of rows in a database and the size of the database relation.
Our results, to some extent, indicate that general AQP systems with predetermined accuracy are difficult to build as the amount of space and time they require would be comparable to that required by systems returning exact answers. 

However, for PK-FK joins we are only able to prove a lower bound that is linear in the size of the largest dimension table. Since most of the time it is the fact table that has enormous size, our theorem does not preclude a relatively efficient AQP system for PK-FK joins.

Our proof is built on the argument in \cite{Alon:1999:TJS:303976.303978} which combines Yao's principle~\cite{4567946} and the probabilistic method. In order to support the quantitative statements with extra parameters and various common query types, we use a more accurate estimate of binomial coefficients and carry out more careful mathematical arguments than those in \cite{Alon:1999:TJS:303976.303978}.

To have a better idea of how tight our lower bounds are, we also compare them with the amount of information required by Bernoulli sampling to give an $(\epsilon, \delta)$-approximation with constant $\epsilon$ and $\delta$.
For \texttt{COUNT} queries with joins over multiple tables of the same size, the upper bound matches the lower bound, unless the problem setting is restricted to the set of queries whose results are always guaranteed to be above a very large $B$.

The above comparison not only implies that our hardness result is tight, but also suggests that any effort aiming at creating an AQP system with guaranteed accuracy bound and the support for general query types with joins and unrestricted data distribution cannot perform better (by more than a constant factor) than Bernoulli sampling in terms of time and space complexity.

In fact, our results do not rule out the possibility to build efficient AQP systems with guaranteed accuracy, but one might need to sacrifice in one of the following aspects:
\begin{itemize}
\item
Restrict the set of queries or data such that any query will return a large number.
In \cite{Ding:2016:SSA:2882903.2915249}, the authors built a two-layered system in which the first layer efficiently deals with ``frequent'' data and infrequent data is left for the second layer possibly taking longer time.

\item
Restrict the type of queries or have more knowledge about the data.
In \cite{Chen:2017:TSJ:3035918.3035921}, the authors designed new sampling-based approximation algorithms for PK-FK joins, chain joins, and star joins, and showed that they outperform Bernoulli sampling in these settings, respectively.
They also studied the restricted setting where frequencies are known.
Related work in this regard is further discussed in \secref{sec:related_works}.

\end{itemize}

The layout of this paper is as follows.
In \secref{sec:related_works}, we discuss research work related to our paper.
The lower bounds for most common query types, including \texttt{COUNT}, \texttt{SUM}, \texttt{COUNT(DISTINCT)}, and \texttt{GROUP-BY}, is presented in \secref{sec:hardness_common}.
The hardness of AQP with additional restrictions on queries (PK-FK joins) or additional information on data (top-K heavy hitters and frequency vectors) is discussed in \secref{sec:hardness_additional}.
In \secref{sec:comparison}, we compare our lower bounds on \texttt{COUNT} queries with the amount of resource required by Bernoulli sampling.

\bigskip

\section{Related Work}\label{sec:related_works}
Apart from \cite{Alon:1999:TJS:303976.303978}, there is not much theoretical work on the hardness of AQP or cardinality estimation over joins.
Recently, Huang et al.~\cite{10.14778/3372716.3372726} proved an information-theoretic lower bound on the lowest variance achievable by any sampling strategy in the two-table case for query functions including \texttt{COUNT}, \texttt{SUM}, and \texttt{AVG}.
Their work is based on results in communication complexity theory on set intersection.
Assume that both Alice and Bob each hold a set of size $k$, say $A$ and $B$, respectively, and they aim to estimate the size of $t = |A \cap B|$.
In the one-way model, Alice computes a summary $\beta(A)$ and sends it to Bob who will estimate $|A \cap B|$ using $B$ and $\beta(A)$.
Pagh et. al~\cite{10.1145/2594538.2594554} showed that any one-way communication protocol that estimates $t$ within a factor between $1 - \delta$ and $1 + \delta$ with probability at least $2/3$ must send at least $\Omega(k/(t\delta^2))$ bits.
This served as a reduction source in \cite{10.14778/3372716.3372726} for \texttt{COUNT}, \texttt{SUM}, and \texttt{AVG} for which similar bounds were given.
They also designed a hybrid sampling scheme that combines stratified, universe, and Bernoulli sampling, and showed that in the two-table case with the frequency vectors, this scheme can achieve the theoretical lower bound within a constant factor.

Most of the lower bounds in this paper are based on the assumption that algorithmic schemes extract information from each database relation without access to other relations, except in \thmref{thm:heavy_hitter} and \thmref{thm:frequency} where we allow heavy hitters and frequency vectors, respectively, on the joining columns.
This assumption is applicable to most mainstream works on AQP or cardinality estimation over joins.
One category of approaches such as \cite{Alon:1999:TJS:303976.303978, 10.5555/982792.982884, CORMODE200558, 10.5555/1083592.1083598, 10.1145/1386118.1386121} is \emph{sketching-based} where a sketch for each table on the join attribute is built while all the other attributes are ignored.
The other major category consists of \emph{sampling-based} algorithms, including 
ripple join~\cite{10.1145/304182.304208, 10.1145/564691.564721}, block-level sampling~\cite{10.1145/1007568.1007602}, bi-level Bernoulli sampling~\cite{10.1145/1007568.1007601}, correlated sampling~\cite{Vengerov:2015:JSE:2824032.2824051}, etc.
A set of research work only supports foreign key joins with the same assumption, e.g., join synopses~\cite{10.1145/304182.304207}, AQUA~\cite{10.5555/645925.671347}, Icicles~\cite{10.5555/645926.672017}, STRAT~\cite{10.1145/1242524.1242526}, and BlinkDB~\cite{10.1145/2465351.2465355}.
Besides, end-biased sampling~\cite{Estan:2006:ESJ:1129754.1129881} samples each tuple with a probability proportional to the frequency of its join key.

We also note that there exists research work on AQP where additional information is assumed to be available.
For instance, there are many sampling-based algorithms that rely on indexes, including adaptive sampling~\cite{LIPTON199518}, bifocal sampling~\cite{10.1145/233269.233340}, and wander join~\cite{10.1145/2882903.2915235}. The use of indexes allows the sampling to be much focused, retrieving only tuples that are relevant to the query.
Recently, machine learning was introduced to address the problem of cardinality estimation~\cite{DBLP:conf/cidr/KipfKRLBK19,dutt2019selectivity}.
In addition, \cite{Chaudhuri:1999:RSO:304182.304206, Zhao:2018:RSO:3183713.3183739} studied the problem of getting a simple random sample of the full join results.
Hardness of AQP under these circumstances is to be addressed in the future.

\bigskip

\section{Hardness of AQP in general}\label{sec:hardness_common}
In order to prove the theorems in this section and \secref{sec:hardness_additional}, we first prove the following key technical lemma. 

\begin{definition}
Let $k, \alpha$ be positive integers and $\beta$ be a positive real number.
A \emph{$(k, \alpha, \beta)$-set} $\mathcal{S}$ is a family of subsets of $\{1, 2, \dots, \alpha k\}$ such that:
\begin{enumerate}
\item $|S| = k, \forall S \in \mathcal{S}$;
\item $|\mathcal{S}| = 2^{\beta k}$;
\item $|S_i \bigcap S_j| \le \frac{k}{2}, \forall S_i, S_j \in \mathcal{S}$.
\end{enumerate}
\end{definition}
We remark that in the above definition, the second item should be $|\mathcal{S}| = \left\lfloor 2^{\beta k} \right\rfloor$. Similarly, in the statement of our theorems, the information-theoretic lower bounds in the number of bits should be rounded down to the nearest integers, which we omit for simplicity.

The following lemma uses the \emph{probabilistic method} and an accurate estimate of binomial coefficients to show when a $(k, \alpha, \beta)$-set exists.
\begin{lemma}\label{lem:set}
Let $\alpha \ge 2$ and $\beta$ be constant numbers. There exists a $k_0(\alpha, \beta)$ such that a $(k, \alpha, \beta)$-set exists for any $k \ge k_0$
if $\beta < \frac{1}{2}\left(\alpha H\left(\frac{1}{\alpha}\right) - (\alpha-1) H\left(\frac{1}{2(\alpha-1)}\right) - 1\right)$.
\end{lemma}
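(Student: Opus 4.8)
The plan is to establish the lemma by a direct application of the probabilistic method. Write $H$ for the binary entropy function and abbreviate
$$
\gamma \;=\; \gamma(\alpha) \;=\; \alpha H\!\left(\tfrac{1}{\alpha}\right) - (\alpha-1) H\!\left(\tfrac{1}{2(\alpha-1)}\right) - 1 ,
$$
so that the hypothesis reads exactly $\beta < \gamma/2$; in particular we may assume $\gamma > 0$, since otherwise no positive $\beta$ satisfies the hypothesis and there is nothing to prove. I will draw $N = \lfloor 2^{\beta k}\rfloor$ subsets $S_1,\dots,S_N$ independently and uniformly at random from the collection of all $k$-element subsets of $\{1,\dots,\alpha k\}$, call an unordered pair $\{i,j\}$ with $i \ne j$ \emph{bad} if $|S_i \cap S_j| > k/2$ (note that if two of the $S_i$ coincide, the corresponding pair is automatically bad), and show that for all sufficiently large $k$ the probability that \emph{some} pair is bad is strictly less than $1$. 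Any realization with no bad pair then consists of $N$ distinct $k$-subsets with pairwise intersections at most $k/2$, i.e.\ a $(k,\alpha,\beta)$-set, and the lemma follows with $k_0$ chosen accordingly.

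The core estimate is the probability that a fixed pair is bad. Conditioning on $S_i$, the intersection size $|S_i \cap S_j|$ is hypergeometric, so $\Pr[|S_i \cap S_j| = t] = \binom{k}{t}\binom{(\alpha-1)k}{k-t}/\binom{\alpha k}{k}$. I would first bound the tail of this distribution by locating its mode: writing $a_t = \binom{k}{t}\binom{(\alpha-1)k}{k-t}$, a direct computation gives $a_{t+1}/a_t = (k-t)^2/\bigl((t+1)((\alpha-2)k + t + 1)\bigr)$, and using $\alpha \ge 2$ one checks this ratio is $< 1$ on the range $t \ge k/2$, so $a_t$ is decreasing there and $\Pr[|S_i \cap S_j| > k/2] \le \tfrac{k}{2}\, a_{\lfloor k/2\rfloor + 1}/\binom{\alpha k}{k}$. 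Then I would apply the entropy form of Stirling's approximation, $\binom{n}{m} = 2^{\,n H(m/n) + O(\log n)}$, together with the continuity of $H$ (to absorb the $\pm O(1)$ shifts in the arguments) and $H(1/2) = 1$, obtaining
$$
\Pr\bigl[|S_i \cap S_j| > k/2\bigr] \;\le\; 2^{\,k + (\alpha-1)k H(1/(2(\alpha-1))) - \alpha k H(1/\alpha) + O(\log k)} \;=\; 2^{-\gamma k + O(\log k)} .
$$

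Finally, a union bound over the at most $\binom{N}{2} \le 2^{2\beta k}$ pairs gives $\Pr[\exists\ \text{bad pair}] \le 2^{(2\beta - \gamma)k + O(\log k)}$, which tends to $0$ as $k \to \infty$ because $2\beta < \gamma$; choosing $k_0 = k_0(\alpha,\beta)$ so that this is $< 1$ for all $k \ge k_0$ completes the argument. The floors in $\lfloor 2^{\beta k}\rfloor$ and at $k/2$, as well as the $\tfrac{k}{2}$ factor from the tail sum and all $2^{o(k)}$ corrections in Stirling's estimate, only cost constant or polynomial factors and are absorbed by the exponential slack coming from the strict inequality. I expect the only delicate step to be the middle paragraph: verifying monotonicity of $a_t$ past $t = k/2$ — which is precisely where the hypothesis $\alpha \ge 2$ is used, the hypergeometric mode sitting at roughly $k/\alpha \le k/2$ — and making the binomial-coefficient estimate sharp enough to pin the exponent down as exactly $-\gamma k + o(k)$, with $H$ taken in base $2$. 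The remaining steps are routine bookkeeping.
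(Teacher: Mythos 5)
Your proof is correct and follows essentially the same route as the paper's: a probabilistic construction of $\lfloor 2^{\beta k}\rfloor$ random $k$-subsets, a tail bound on the hypergeometric intersection distribution obtained from its monotonicity past $k/2$, the entropy estimate $\log\binom{x}{y} = (1+o(1))H(y/x)\,x$, and a union bound over pairs, yielding the condition $2\beta < \alpha H(1/\alpha) - (\alpha-1)H(1/(2(\alpha-1))) - 1$. The only differences are that you explicitly verify the monotonicity of the hypergeometric terms via the ratio $a_{t+1}/a_t$ (which the paper asserts as ``easy to verify'') and you handle the possibility of coincident sets; both are welcome refinements rather than deviations.
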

\begin{proof}
In order to show the existence of such a family of subsets, we instead show that for a randomly chosen set $\mathcal{S}'$ of subsets of $\{1, 2, \dots, \alpha k\}$ satsifying (1) and (2), the probability that (3) is violated is less than 1. As a consequence, the probability of the existence of a set satisfying (1), (2), and (3) simultaneously is larger than 0.

For any two randomly chosen subsets $S_i, S_j \in \mathcal{S}'$,
\begin{equation}\label{eqn:binomial}
\prob{|S_i \bigcap S_j| = l} = \frac{{k \choose l}{\alpha k - k \choose k - l}}{{\alpha k \choose k}}.
\end{equation}
It is easy to verify that for any $\frac{k}{2} < l \le k$,
\[
\prob{|S_i \bigcap S_j| = l} < \prob{|S_i \bigcap S_j| = \frac{k}{2}},
\]
thus for $S_i$ and $S_j$ the probability that (3) is violated is
\begin{equation}\label{eqn:single_pair}
\prob{|S_i \bigcap S_j| > \frac{k}{2}} = \sum\limits_{l=k/2 +1}^{k} \prob{|S_i \bigcap S_j| = l} < \frac{k}{2} \cdot \prob{|S_i \bigcap S_j| = \frac{k}{2}}.
\end{equation}

According to (\ref{eqn:binomial}), $\prob{|S_i \bigcap S_j| = \frac{k}{2}} = \frac{{k \choose k/2}{\alpha k - k \choose k/2}}{{\alpha k \choose k}}.$
To get an accurate estimation of it, we use the following asymptotics from \cite{binomial_coefficients} when $y = \Omega(x)$, i.e. when $y$ is linear in $x$:
\[
\log {x \choose y} = (1 + o(1)) \ H\left(\frac{y}{x}\right) x,
\]
where $H(p) = -p\log p - (1-p) \log (1-p)$ is the \emph{binary entropy function}.
Therefore, combining with the fact that ${k \choose k/2} \le 2^k$, we have the following when $\alpha$ is a constant:
\begin{equation}\label{eqn:special}
\begin{split}
\prob{|S_i \bigcap S_j| = \frac{k}{2}} 
& \le 2^{o(k)} \cdot \frac{2^k \cdot 2^{H\left(\frac{1}{2(\alpha-1)}\right) (\alpha-1)k}}{2^{H\left(\frac{1}{\alpha}\right) \alpha k}} \\
& = 2^{o(k)} \cdot 2^{\left(1 + (\alpha-1) H\left(\frac{1}{2(\alpha-1)}\right) - \alpha H\left(\frac{1}{\alpha}\right) \right)k}.
\end{split}
\end{equation}

By the \emph{union bound}, the probability that there exists some pair of subsets violating (3) is at most
\begin{equation}\label{eqn:union}
\sum_{S_i, S_j \in {\mathcal{S} \choose 2}} \prob{|S_i \bigcap S_j| > \frac{k}{2}} \le 2^{2\beta k} \cdot \prob{|S_i \bigcap S_j| > \frac{k}{2}}
\end{equation}

Combining (\ref{eqn:single_pair}), (\ref{eqn:special}), and (\ref{eqn:union}), 
this probability that the ``bad'' event happens is less than 1 for $k$ large enough if
\[
1 + (\alpha-1) H\left(\frac{1}{2(\alpha-1)}\right) - \alpha H\left(\frac{1}{\alpha}\right) + 2\beta < 0.
\]
\end{proof}

In order to give a lower bound on the amount of resources that any randomized algorithm has to cost in the worst input case, according to \emph{Yao's principle} for Monte Carlo algorithms~\cite{4567946}, we can instead study the amount of resources that any deterministic algorithm has to cost in the worst case of input distribution, thanks to the following theorem.
The cost could be measured in terms of time complexity, space complexity, or any other quantity that describes an algorithm.
In this paper, the cost will be the amount of information measured in terms of the length of bit strings.

\begin{theorem}[Yao~\cite{4567946}]\label{thm:yao}
Denote by $E(R, x)$ to be the expected cost of a randomized algorithm $R$ on input $x$. Let $\mathcal{R}_\delta$ be the set of randomized algorithms that make mistakes with probability no more than $\delta$ on the worst input case.
Denote by $C(A, d)$ to be the average cost of a deterministic algorithm $A$ on a probability distribution of input $d$. Let $\mathcal{A}_\delta$ be the set of deterministic algorithms that make mistakes on no more than $\delta$ proportion of the input in distribution $d$.
Then
\[
\min_{R \in \mathcal{R}_\delta} \max_x E(R, x) \ge \frac{1}{2} \max_d \min_{A \in \mathcal{A}_{2\delta}} C(A, d).
\]
\end{theorem}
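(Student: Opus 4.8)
The plan is to run the standard averaging argument behind Yao's minimax principle, in its Monte Carlo (two-sided error) form, using that here the cost is the length of a bit string and hence always a nonnegative real number. First I would fix notation: identify a randomized algorithm $R$ with a probability distribution over deterministic algorithms, so that $R$ runs the deterministic algorithm $A_i$ with probability $p_i$, and $E(R,x) = \sum_i p_i\,\mathrm{cost}(A_i,x)$ while $C(A,d) = \mathbb{E}_{x\sim d}[\mathrm{cost}(A,x)]$. Rather than fussing over whether the minimizing $R$ on the left-hand side is attained, I would prove the inequality
\[
\max_x E(R,x) \;\ge\; \frac{1}{2}\,\min_{A\in\mathcal{A}_{2\delta}} C(A,d)
\]
for \emph{every} $R\in\mathcal{R}_\delta$ and every input distribution $d$, and only at the very end take the infimum over $R$ and the supremum over $d$.

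Next, fix $R\in\mathcal{R}_\delta$ and $d$. The elementary but essential first observation is that the worst-case expected cost dominates the $d$-averaged expected cost, and the latter rearranges over the support of $R$ by Fubini:
\[
\max_x E(R,x) \;\ge\; \mathbb{E}_{x\sim d}\big[E(R,x)\big] \;=\; \sum_i p_i\, C(A_i,d).
\]
So it remains to bound $\sum_i p_i C(A_i,d)$ from below by $\tfrac12 \min_{A\in\mathcal{A}_{2\delta}} C(A,d)$.

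The key step is a Markov-inequality argument showing that at least half of the probability mass of $R$ lies on deterministic algorithms that belong to $\mathcal{A}_{2\delta}$. Since $R\in\mathcal{R}_\delta$ errs with probability at most $\delta$ on the worst input, it certainly errs with probability at most $\delta$ when the input is drawn from $d$; writing $q_i := \Pr_{x\sim d}[A_i\text{ errs on }x]$, this is exactly $\sum_i p_i q_i \le \delta$. Applying Markov's inequality to the nonnegative random variable $i\mapsto q_i$ under the distribution $(p_i)$ gives $\sum_{i:\,q_i>2\delta} p_i \le \frac{1}{2\delta}\sum_i p_i q_i \le \frac12$, so the ``good'' set $G := \{i : q_i \le 2\delta\}$ satisfies $\sum_{i\in G} p_i \ge \frac12$, and every $A_i$ with $i\in G$ is in $\mathcal{A}_{2\delta}$ by definition. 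Now invoking nonnegativity of cost to discard the terms outside $G$,
\[
\sum_i p_i\, C(A_i,d) \;\ge\; \sum_{i\in G} p_i\, C(A_i,d) \;\ge\; \Big(\min_{A\in\mathcal{A}_{2\delta}} C(A,d)\Big)\sum_{i\in G} p_i \;\ge\; \frac{1}{2}\,\min_{A\in\mathcal{A}_{2\delta}} C(A,d).
\]
Chaining the three displays and then taking the infimum over $R\in\mathcal{R}_\delta$ (the right side does not depend on $R$) and the supremum over $d$ yields the claimed inequality.

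I expect the only delicate points to be (i) the Markov step and the bookkeeping of the constants, i.e.\ making transparent why the threshold $2\delta$ and the factor $\tfrac12$ are the right ones and how they propagate into the final bound, and (ii) the mild technicality of a possibly infinite support of $R$; but since in our setting inputs range over finitely many databases and the extracted information is a finite bit string, one may take the support of $R$ to be countable, all the sums above converge, and the truncation to $G$ is justified precisely because $\mathrm{cost}\ge 0$. This is essentially Yao's original argument, reproduced here for completeness because our lower-bound proofs in Sections~\ref{sec:hardness_common} and~\ref{sec:hardness_additional} will invoke it with cost equal to the number of bits extracted by the scheme $\Phi$.
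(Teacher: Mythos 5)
The paper does not actually prove this statement---it is quoted as a known result and attributed to Yao's original paper---so there is no in-paper proof to compare against. Your argument is the standard and correct derivation of the Monte Carlo form of Yao's principle: the averaging step $\max_x E(R,x)\ge\mathbb{E}_{x\sim d}[E(R,x)]=\sum_i p_i C(A_i,d)$, the Markov-inequality step showing that at least half the mass of $R$ sits on algorithms in $\mathcal{A}_{2\delta}$ (which is exactly where the threshold $2\delta$ and the factor $\tfrac12$ come from), and the truncation to the good set using nonnegativity of the cost are all sound, and the final passage to $\min_R$ and $\max_d$ is valid because the intermediate bound holds for every $R\in\mathcal{R}_\delta$ and every $d$.
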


Next we define the error metric to be used on the approximate result of aggregate functions.

\begin{definition}\label{defn:approx}
Suppose $f: \Sigma^* \rightarrow \mathbb{R}$ is a function mapping problem instances to real numbers. An \emph{$(\epsilon, \delta)$-approximation} for $f$ is a randomized algorithm that takes as input an instance $x$, and outputs a number $Y$ (a random variable) such that 
\[
\operatorname{Pr}\left[\frac{1}{1+\epsilon} f(x) \le Y \le (1+ \epsilon) f(x)\right] \ge 1- \delta.
\]
\end{definition}

According to \thmref{thm:yao}, in order to show a lower bound on the amount of resources that any randomized algorithm has to use to give an $(\epsilon, \delta)$-approximation, we only need to find a ``bad'' input distribution for which no deterministic algorithm could return a ``good'' result (within a ratio of $1+\epsilon$) for more than $1 - 2\delta$ proportion of the input.
In this paper, we will assume the error probability $\delta < 6.25\%$ for technical reasons.

\begin{definition}
Given $\delta < 6.25\%$, define $C(\delta) = \frac{H(8\delta) - (1-8\delta)H\left(\frac{4\delta}{1-8\delta}\right)}{16 \delta} - \frac{1}{2}$, where $H(p) = -p \log p - (1-p) \log (1-p)$ is the entropy function.
\end{definition}

\begin{remark}
When $1 - \delta = 95\%$, $C(\delta) > \frac{1}{50}$; when $1 - \delta = 99\%$, $C(\delta) > \frac{1}{2}$; when $1 - \delta = 99.9\%$, $C(\delta) > 3.6$.
\end{remark}

\begin{theorem}[2-table \texttt{COUNT}]\label{thm:2-table}
Let $\Phi$ be any scheme which assigns bit strings to database relations, so that there is a function $D$ such that given two relations $R_1$ of size $n_1$ and $R_2$ of size $n_2$, $D(\Phi(R_1), \Phi(R_2))$ gives an $(\epsilon, \delta)$-approximation for the \texttt{COUNT} result of $R_1 \Join R_2$ (join size), when an a priori lower bound $B < \frac{n_1n_2}{(1+\epsilon)^2}$ is given on the join size. Then the length of the bit string that $\Phi$ assigns must be at least
\[
\frac{C(\delta)}{2} \cdot \min\left(\max(m_1, m_2), \frac{m_1 m_2}{(\epsilon^2 + 2\epsilon) B}\right),
\] where $m_1 = n_1\left(1 - \left(\frac{B}{n_1 n_2}\right)^{1/2}\right)$, and $m_2 = n_2\left(1 - \left(\frac{B}{n_1 n_2}\right)^{1/2}\right)$.
\end{theorem}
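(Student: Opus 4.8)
The plan is to apply Yao's principle (\thmref{thm:yao}): it is enough to exhibit one hard distribution on inputs such that every deterministic scheme which is correct on at least a $1-2\delta$ fraction of inputs must assign a long bit string to one of the two relations. The hard instances form a ``needle in a haystack'' family built from a $(t,\alpha,\beta)$-set $\mathcal{F}$ supplied by \lemref{lem:set}, where $t$ is a size parameter fixed below. Split each relation into a heavy part and a variable part: the heavy parts of $R_1$ and $R_2$ consist of $n_1-m_1$ and $n_2-m_2$ copies of one common distinguished key, which, because $m_i = n_i(1-(B/n_1n_2)^{1/2})$, makes their join contribution exactly $(n_1-m_1)(n_2-m_2)=B$. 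In the relation with the larger variable part, say $R_2$ (so $m_2=\max(m_1,m_2)$), the variable part holds the $t$ keys of a set $T\in\mathcal{F}$, drawn from the universe $\{1,\dots,\alpha t\}$, each repeated $m_2/t$ times; in the other relation the variable part holds $m_1$ copies of a single ``probe'' key $v\in\{1,\dots,\alpha t\}$. Then $R_1\Join R_2$ has size $B$ if $v\notin T$ and $B+m_1m_2/t$ if $v\in T$. Taking $t=\min\bigl(\max(m_1,m_2),\,m_1m_2/((\epsilon^2+2\epsilon)B)\bigr)$ (rounded down) guarantees both that the multiplicities $m_2/t$ are at least one and that $(B+m_1m_2/t)/B>(1+\epsilon)^2$, so by \defnref{defn:approx} any $(\epsilon,\delta)$-approximation must, in effect, decide whether $v\in T$.

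The hard distribution draws $v$ uniformly from $\{1,\dots,\alpha t\}$ and $T$ uniformly from $\mathcal{F}$, independently. Fix a deterministic scheme; even allowing $\Phi(R_1)$ to reveal $v$ outright (we only lower-bound $\Phi(R_2)$), suppose $\Phi(R_2)$ has length $s$, hence at most $2^s$ values. For each value $\sigma$ the decoder is then pinned to a single guess $T_\sigma\subseteq\{1,\dots,\alpha t\}$ of which keys are present, and on an instance with $\Phi(R_2(T))=\sigma$ it errs on exactly a $|T\triangle T_\sigma|/(\alpha t)$ fraction of probes $v$. Keeping the overall error at most $2\delta$ therefore bounds $\sum_T |T\triangle T_{\Phi(R_2(T))}|$ by $2\delta\,|\mathcal{F}|\,\alpha t$; but condition (3) of a $(t,\alpha,\beta)$-set says the members of $\mathcal{F}$ are pairwise at symmetric-difference distance at least $t$, so by the triangle inequality any single $T_\sigma$ is within distance $t/2$ of at most one member of $\mathcal{F}$. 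An averaging argument then gives: at most a $4\delta\alpha$ fraction of the $|\mathcal{F}|=2^{\beta t}$ sets $T$ can be at distance $>t/2$ from $T_{\Phi(R_2(T))}$, and each of the remaining ($\ge 1/2$ of them, once $\alpha\le 1/(8\delta)$) is the unique close member of its own bucket; hence $2^s\ge\tfrac12\,2^{\beta t}$, i.e.\ $s\ge\beta t-1$.

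It remains to choose $\alpha$ and $\beta$ to make $\beta$ as large as possible subject to the constraint $\alpha\le 1/(8\delta)$ just obtained (this is exactly why we assume $\delta<1/16=6.25\%$, so that $\alpha\ge 2$ as \lemref{lem:set} demands) and to the existence condition of \lemref{lem:set}. Taking $\alpha=1/(8\delta)$, that existence condition becomes $\beta<C(\delta)$ on the nose: since $1/\alpha=8\delta$ one has $\alpha H(1/\alpha)=H(8\delta)/(8\delta)$ and $(\alpha-1)H\bigl(\tfrac{1}{2(\alpha-1)}\bigr)=\tfrac{1-8\delta}{8\delta}H\bigl(\tfrac{4\delta}{1-8\delta}\bigr)$, so $\tfrac12\bigl(\alpha H(1/\alpha)-(\alpha-1)H(\tfrac{1}{2(\alpha-1)})-1\bigr)=C(\delta)$. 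Thus the deterministic cost on the hard distribution is at least $(C(\delta)-o(1))\,t$, and Yao's principle halves it, which yields the claimed $\tfrac{C(\delta)}{2}\min\bigl(\max(m_1,m_2),\,m_1m_2/((\epsilon^2+2\epsilon)B)\bigr)$ for randomized schemes.

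I expect the real work to be in making the last two steps quantitative rather than asymptotic. The binomial-coefficient estimate behind \lemref{lem:set} carries $(1+o(1))$ factors that must be controlled uniformly so that the surviving constant is genuinely $C(\delta)/2$; and the averaging/packing step has to be run with the exact radius $t/2$ and the exact error budget, so that the two independent appearances of a factor $2$ — Yao's $\tfrac12$ out front, and Yao's doubling of $\delta$ inside (which is what fixes $\alpha=1/(8\delta)$ and hence the constant $C(\delta)$) — are accounted for correctly, while the side conditions (integrality of $m_2/t$, the rounding of $t$, $t\le\max(m_1,m_2)$, $t\ge 1$, and $B<n_1n_2/(1+\epsilon)^2$ so that $m_1,m_2>0$) all hold simultaneously. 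The asymmetry $m_1\ne m_2$ is a minor point, settled by always placing the set $T$ in the relation with the larger variable part, which is what makes $\max(m_1,m_2)$ appear; the whole construction uses the balanced split of the heavy key that maximizes $m_1m_2$, which is symmetric and simultaneously good for both terms of the minimum.
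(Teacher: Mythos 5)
Your proposal is correct and follows essentially the same route as the paper's proof: a hard distribution built from the $(k,\alpha,\beta)$-set of \lemref{lem:set} with a common heavy key contributing exactly $B$ to the join, a single random probe key in the smaller variable part against a packed family of key-sets in the larger one, the choice $\alpha=1/(8\delta)$ yielding the constant $C(\delta)$, and Yao's principle supplying the final factor $\tfrac12$. Your Markov/packing phrasing of the counting step is an equivalent restatement of the paper's ``at most one good relation per bit-string class'' argument, and the side conditions you flag are exactly the ones the paper handles.
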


Before proving the theorem, let us briefly explain the role played by the lower bound $B$ on the join size.

Since the join size is at most $n_1n_2$, the requirement $B < \frac{n_1n_2}{(1+\epsilon)^2}$ is necessary, because otherwise any value in $\left[\frac{n_1n_2}{1+\epsilon}, (1+\epsilon)B\right]$ is a good estimate, destroying the purpose of any approximation scheme.
This should explain similar restrictions on $B$ in most of the theorems in this paper.

In $m_1$ and $m_2$, $\frac{B}{n_1 n_2}$ can be viewed as a lower bound on the \emph{selectivity} of the queries under consideration.
It says that when $B$ decreases, $m_1$ and $m_2$ increase, i.e., the more selective the queries are, the more information an algorithm needs to give precise estimations of the query results, which aligns with our intuition.

If we assume $m_1 \le m_2$, the threshold between $\max(m_1, m_2)$ and $\frac{m_1 m_2}{(\epsilon^2 + 2\epsilon) B}$ in the theorem statement is $B = \frac{m_1}{\epsilon^2 + 2\epsilon}$. Above this value, the higher $B$ is, the less our lower bound is. This is the range in which AQP systems can potentially save time and space resources compared with systems returning exact values. However, below this value, the lower bound is roughly linear in $n_2$, which says a scheme need information no less than the larger table size. Within this range, AQP systems might not be able to dominate other systems in terms of performance.

\begin{proof}[Proof of \thmref{thm:2-table}]
Without loss of generality, assume $n_1 \le n_2$ (and equivalently $m_1 \le m_2$).
We also assume each of $R_1$ and $R_2$ has at least two columns, and the joining operation is on the column $R_1.C$ (i.e. $R_2.C$).
Define $k$ such that
\begin{itemize}
\item if $m_1 < (\epsilon^2 + 2\epsilon) B$, let $k = \frac{m_1 m_2}{(\epsilon^2 + 2\epsilon) B}$;
\item otherwise, let $k = m_2$.
\end{itemize}
In other words, $k = \min\left(m_2, \frac{m_1 m_2}{(\epsilon^2 + 2\epsilon) B}\right)$.
Let $t = \frac{k}{8\delta}$.
Fix a set $T = \{1, 2, \dots, t\}$ of $t$ possible values for column $C$, denoted by \emph{types}. Also assume there is an extra type $0$ of value.
Define $\alpha = \frac{1}{8\delta}$ and fix $\beta < \frac{1}{2}\left(\alpha H\left(\frac{1}{\alpha}\right) - (\alpha-1) H\left(\frac{1}{2(\alpha-1)}\right) - 1\right)$.
According to \lemref{lem:set}, there exists a $(k, \alpha, \beta)$-set $\mathcal{S}$ as a family of subsets of $T$.

Now we are ready to define the ``bad'' distribution on inputs for $R_1$ and $R_2$.
Let $d_1$ be a probabilistic distribution on relations of size $n_1$ such that
\begin{itemize}
\item the first $m_1$ entries on column $C$ takes values from $T$ uniformly at random;
\item the remaining $\sqrt{\frac{n_1}{n_2} \cdot B}$ entries on column $C$ take the value of type $0$.
\end{itemize}
Let $d_2$ be the uniform distribution on relations of size $n_2$ such that
\begin{itemize}
\item the first $m_2$ entries on column $C$ are partitioned into $k$ parts of equal size $\frac{m_2}{k}$ and equal value and the $k$ different partitions take $k$ corresponding different types of values from a uniformly chosen subset $S \in \mathcal{S}$;
\item the remaining $\sqrt{\frac{n_2}{n_1} \cdot B}$ entries on column $C$ take the value of type $0$.
\end{itemize}

When $R_1$ and $R_2$ are independently from $d_1$ and $d_2$, respectively, the join size is either exactly $B$ (if the first $m_1$ entries in $R_1$ fail to join any any of the first $m_2$ entries in $R_2$), or at least $(1 + \epsilon)^2 B$.

Consider partitioning the relations into classes according to the bit string assigned by $\Phi$. For each relation in $d_1$, the function $D$ gives the same estimate for all relations in $d_2$ in the same class. However, for each class, there can be at most one relation in $d_2$ for which the estimate is within $\left(\frac{1}{1+\epsilon}, 1+\epsilon\right)$ ratio of the true value for more than $1 - 2\delta$ of the relations in $d_1$.

To see this, consider $S_i, S_j \in \mathcal{S}$ such that the corresponding relations in $d_2$ map to the same class, and let $T' = S_1 \oplus S_2$. For each relation $R_1$ from $d_1$ whose column $C$ has value in $T'$, the join size is $B$ for one of $S_i, S_j$ and at least $(1 + \epsilon)^2 B$ for the other.
Thus any estimate will be at least $(1 + \epsilon)$ away for at least one of them. 
By properties of $\mathcal{S}$, $|S_1 \setminus S_2| \ge \frac{k}{2}$ and $|S_2 \setminus S_1| \ge \frac{k}{2}$, and hence for one of them, the estimate will have at least a $(1 + \epsilon)$ factor of error for more than $\frac{k}{2} / \frac{k}{8\delta} = 4\delta$ of the relations in $d_1$.

When the length of bit strings is less than $\beta k$, i.e. at most $\beta k - 1$, then the number of distinct classes it can express is at most $\frac{1}{2}$ of $|\mathcal{S}|$.
Thus the proportion of input that the estimate will have an error of at least a $1 + \epsilon$ factor is at least $4\delta \cdot \frac{1}{2} = 2\delta$.
Therefore, the length of the bit strings should be at least $\beta k = C(\delta) \cdot \min\left(m_2, \frac{m_1 m_2}{(\epsilon^2 + 2\epsilon) B}\right)$.
Applying \thmref{thm:yao} gives the result.
\end{proof}

\begin{theorem}[multi-table \texttt{COUNT}]\label{thm:multi-table}
Let $\Phi$ be any scheme which assigns bit strings to database relations, so that there is a function $D$ such that given $p$ relations $R_1$ of size $n_1$, $R_2$ of size $n_2$, \dots, and $R_p$ of size $n_p$, $D(\Phi(R_1), \Phi(R_2), \cdots, \Phi(R_p))$ gives an $(\epsilon, \delta)$-approximation for the \texttt{COUNT} result of $\Join_i R_i$ (join size), when an a priori lower bound $B < \frac{\prod_i n_i}{(1+\epsilon)^2}$ is given on the join size. Then the length of the bit string that $\Phi$ assigns must be at least
\[
\frac{C(\delta)}{2} \cdot \min\left( \max_i m_i, \frac{\prod_i m_i}{(\epsilon^2 + 2\epsilon) B}\right),
\] where $m_i = n_i\left(1 - \left(\frac{B}{\prod_i n_i}\right)^{1/p}\right)$.
\end{theorem}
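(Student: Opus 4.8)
The plan is to mimic the proof of \thmref{thm:2-table}; the single new idea is that in the hard input distribution the tables $R_1,\dots,R_{p-1}$ are \emph{correlated} so that they all carry one common, random value on the join column. First I would relabel the tables so that $n_1\le n_2\le\cdots\le n_p$. Since each $m_i$ equals $n_i$ times the common factor $1-(B/\prod_j n_j)^{1/p}$, this forces $m_1\le\cdots\le m_p$, so $\max_i m_i=m_p$. As in the two-table case, set
\[
k=\min\left(m_p,\ \frac{\prod_i m_i}{(\epsilon^2+2\epsilon)B}\right),\qquad t=\frac{k}{8\delta},\qquad \alpha=\frac{1}{8\delta},
\]
so that $\alpha\ge 2$ because $\delta<6.25\%$; fix a constant $\beta<\frac12\left(\alpha H(1/\alpha)-(\alpha-1)H\left(\frac{1}{2(\alpha-1)}\right)-1\right)=C(\delta)$; assume $k$ is large enough for \lemref{lem:set}; and let $\mathcal{S}$ be a $(k,\alpha,\beta)$-set of subsets of a pool of ``types'' $T=\{1,\dots,t\}$, reserving one further type $0$.

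Next I would build the bad distribution $d$ on tuples $(R_1,\dots,R_p)$. The table $R_p$ takes over the former role of $R_2$: its first $m_p$ rows on the join column are split into $k$ equal blocks of size $m_p/k$ carrying the $k$ distinct values of a uniformly random $S\in\mathcal{S}$, while its remaining $n_p-m_p$ rows carry type $0$. The tables $R_1,\dots,R_{p-1}$ are drawn \emph{jointly}: draw a single type $\tau$ uniformly from $T$, and for each $i<p$ let all of the first $m_i$ rows of $R_i$ equal $\tau$ and the remaining $n_i-m_i$ rows equal $0$; each table is also given a second column of distinct values so that its rows are distinct, matching the two-table construction. The role of the common $\tau$ is to make the join size two-valued. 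The all-$0$ chains contribute $\prod_i(n_i-m_i)=B$ by the definition of the $m_i$; a chain through the type-$\tau$ rows exists only when $\tau\in S$, and then such chains contribute $\left(\prod_{i<p}m_i\right)\cdot\frac{m_p}{k}=\frac{\prod_i m_i}{k}$; and no other value can appear in all $p$ tables. By the choice of $k$ one checks that $\frac{\prod_i m_i}{k}\ge(\epsilon^2+2\epsilon)B$ in both regimes --- with equality when $k=\frac{\prod_i m_i}{(\epsilon^2+2\epsilon)B}$, and $\frac{\prod_i m_i}{k}=\prod_{i<p}m_i\ge(\epsilon^2+2\epsilon)B$ when $k=m_p$ --- so the join size is always either exactly $B$ (when $\tau\notin S$) or at least $(1+\epsilon)^2 B$ (when $\tau\in S$).

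From here the argument transcribes the proof of \thmref{thm:2-table} almost verbatim, with the lower bound charged to $\Phi(R_p)$. Partition $\mathcal{S}$ into classes according to the value of $\Phi(R_p)$; if $\Phi$ assigns $R_p$ fewer than $\beta k$ bits, there are at most $\frac12|\mathcal{S}|$ classes. Whenever $S_i,S_j$ lie in the same class, $D$ returns one and the same estimate on $(\Phi(R_1),\dots,\Phi(R_{p-1}),\Phi(R_p))$ regardless of which of $S_i,S_j$ the table $R_p$ used, whereas for every $\tau\in S_i\oplus S_j$ the two true join sizes are $B$ and $\ge(1+\epsilon)^2 B$, which no single value approximates within a factor $1+\epsilon$; since $|S_i\oplus S_j|\ge k=8\delta\,t$ by property~(3) of $\mathcal{S}$, the same pigeonhole bookkeeping as in the two-table case shows that at least half the subsets in $\mathcal{S}$ force $D$ to err for more than a $4\delta$ fraction of the values of $\tau$, so that under $d$ any such $D$ is wrong on more than a $2\delta$ fraction of the inputs. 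Invoking \thmref{thm:yao} and letting $\beta\to C(\delta)$ then gives $\frac{C(\delta)}{2}\cdot\min\left(\max_i m_i,\ \frac{\prod_i m_i}{(\epsilon^2+2\epsilon)B}\right)$. As in the two-table case I would suppress integrality and divisibility issues, noting the bound is vacuous once the second argument of the $\min$ falls below $1$.

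The one step that is genuine content rather than transcription is the design of the joint distribution on $R_1,\dots,R_{p-1}$. The naive product construction --- each $R_i$ independently carrying its own uniform type --- fails: a full $p$-fold type-chain would then survive with probability only about $(8\delta)^{p-1}/k^{p-2}$, which tends to $0$ and would collapse the lower bound to an expression involving a $(p-1)$-st root of $\frac{\prod_i m_i}{(\epsilon^2+2\epsilon)B}$. Synchronizing the random type $\tau$ across all of $R_1,\dots,R_{p-1}$ is precisely what simultaneously keeps the join size two-valued and keeps the ``bad'' event $\tau\in S_i\oplus S_j$ at the constant probability $\ge 8\delta$ of the two-table proof. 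Everything downstream --- the join identity, the class-collision/pigeonhole count, and the invocation of Yao's principle --- is then the two-table argument with $n_2,m_2$ replaced by $n_p,m_p$ and $n_1,m_1$ by the aggregates $\prod_{i<p}n_i$ and $\prod_{i<p}m_i$.
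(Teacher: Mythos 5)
Your proposal is correct and follows essentially the same route as the paper's proof: the paper likewise reduces to the two-table argument by having the first $m_i$ entries of each of $R_1,\dots,R_{p-1}$ all take one common random type from $T$ (the correlated construction you identify as the key point), with $R_p$ playing the role of the partitioned table. Your explicit verification that the join size is two-valued and your remark on why the independent-product construction fails are useful elaborations, but the underlying argument is the same.
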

\begin{proof}
Without loss of generality, assume $n_p \ge n_i$ for any $1 \le i \le p-1$.
Again we assume each of $R_i$ has at least two columns.
%
Define $k$ such that
\begin{itemize}
\item if $\prod_{i=1}^{p-1} m_i < (\epsilon^2 + 2\epsilon) B$, let $k = \frac{\prod_i m_i}{(\epsilon^2 + 2\epsilon) B}$;
\item otherwise, let $k = m_p$.
\end{itemize}
In other words, $k = \min\left(m_p, \frac{\prod_i m_i}{(\epsilon^2 + 2\epsilon) B}\right)$.
Also define $t, \alpha, \beta$, and the set $T, \mathcal{S}$ similarly as in \thmref{thm:2-table}.

Consider the following input distribution.
Let $d_1$ be a joint distribution on the joining columns of tables $R_1, R_2, \dots, R_{p-1}$ such that
\begin{itemize}
\item the first $m_i$ entries in $R_i$ take the same value uniformly randomly chosen from $T$;
\item the other entries in each table take the value of type $0$.
\end{itemize}
Let $d_2$ be a joint distribution on the joining columns of table $R_p$ such that
\begin{itemize}
\item the first $m_p$ entries are partitioned into $k$ parts of equal size $\frac{m_p}{k}$ and equal value, and the $k$ different partitions take $k$ corresponding different types of values from a uniformly chosen subset $S \in \mathcal{S}$ on each joining column;
\item the other entries on column $C$ take the value of type $0$.
\end{itemize}
The rest of the proof follows that in \thmref{thm:2-table}.
\end{proof}

\begin{theorem}[multi-table \texttt{SUM}]
Let $\Phi$ be any scheme which assigns bit strings to database relations, so that there is a function $D$ such that given $p$ relations $R_1$ of size $n_1$, $R_2$ of size $n_2$, \dots, and $R_p$ of size $n_p$, $D(\Phi(R_1), \Phi(R_2), \cdots, \Phi(R_p))$ gives an $(\epsilon, \delta)$-approximation for the \texttt{SUM} result of $\Join_i R_i$, when an a priori lower bound $B < \frac{M \cdot \prod_i n_i}{(1+\epsilon)^2}$ is given on the \texttt{SUM} result and the maximum value $M$ is known on the \texttt{SUM} column. Then the length of the bit string that $\Phi$ assigns must be at least
\[
\frac{C(\delta)}{2} \cdot \min\left( \max_i m_i, \frac{M \cdot \prod_i m_i}{(\epsilon^2 + 2\epsilon) B}\right),
\] where $m_i = n_i\left(1 - \left(\frac{B}{M \cdot \prod_i n_i}\right)^{1/p}\right)$.
\end{theorem}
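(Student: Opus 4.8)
The plan is to reduce the multi-table \texttt{SUM} bound to the multi-table \texttt{COUNT} bound of \thmref{thm:multi-table} by padding the aggregated column. I would reuse the construction in the proof of \thmref{thm:multi-table} essentially verbatim, except that the count threshold is taken to be $B' = B/M$ instead of $B$: the same $k = \min(m_p,\, \prod_i m_i/((\epsilon^2+2\epsilon)B'))$, the same auxiliary quantities $t,\alpha,\beta$, the same type set $T$, the same $(k,\alpha,\beta)$-set $\mathcal{S}$, and the same joint input distributions $d_1$ on the joining columns of $R_1,\dots,R_{p-1}$ and $d_2$ on that of $R_p$, but now with $m_i = n_i(1 - (B/(M\prod_j n_j))^{1/p})$. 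Since each $R_i$ is assumed to have at least two columns, designate a second column (in $R_1$, say) as the \texttt{SUM} column $A$ and set $r.A = M$ for every row $r$ of that relation; then $M$ is the maximum value occurring on column $A$, every result tuple of $\Join_i R_i$ inherits the value $M$ on $A$, and hence the \texttt{SUM} of the join equals $M$ times its size.

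Next I would check that the arithmetic lines up. The number of type-$0$ padding entries in $R_i$ is $n_i - m_i = n_i(B/(M\prod_j n_j))^{1/p}$, and since there are $p$ such factors each carrying the exponent $1/p$, their product over $i$ equals $B/M = B'$; thus the portion of the join formed entirely by type-$0$ tuples has size exactly $B'$, contributing exactly $M B' = B$ to the \texttt{SUM}. As in \thmref{thm:multi-table}, either the ``first $m_i$'' portions fail to join across all tables, in which case the join size stays at $B'$ and the \texttt{SUM} equals $B$, or they do join and the join size is at least $(1+\epsilon)^2 B'$, making the \texttt{SUM} at least $(1+\epsilon)^2 B$. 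In particular the a priori guarantee ``\texttt{SUM} $\ge B$'' holds on every instance in the support, and the needed condition $B' < \prod_i n_i/(1+\epsilon)^2$ is precisely the hypothesis $B < M\prod_i n_i/(1+\epsilon)^2$. The indistinguishability argument of \thmref{thm:2-table} and \thmref{thm:multi-table} then goes through unchanged: if two subsets $S_i,S_j \in \mathcal{S}$ induce the same class under $\Phi(R_p)$, any single estimate is off by a factor of at least $1+\epsilon$ on more than $4\delta$ of the inputs drawn from $d_1$ for one of them, and a bit string shorter than $\beta k$ can separate at most half of $\mathcal{S}$, forcing an error on more than $2\delta$ of the inputs. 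Hence the bit strings must have length at least $\beta k = C(\delta)\cdot\min(\max_i m_i,\, M\prod_i m_i/((\epsilon^2+2\epsilon)B))$, and \thmref{thm:yao} delivers the claimed factor-$1/2$ bound.

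There is no genuinely new difficulty here; the two points worth noting are that the padded column is read by the aggregate even though $\Phi$ inspects each relation in isolation — which is not a problem, since $\Phi$ and the decoder $D$ are arbitrary and the bound is information-theoretic — and that $B/M$ and the block sizes $m_p/k$ need not be integers, which we treat exactly as elsewhere in the paper, by rounding down and absorbing the loss into the floor conventions and the $o(k)$ terms already in force. If a self-contained argument is preferred, one may instead replay the proof of \thmref{thm:multi-table} line by line with $m_i$ as in the present statement and with the designated \texttt{SUM} column identically equal to $M$, obtaining the same bound without invoking \thmref{thm:multi-table} as a black box.
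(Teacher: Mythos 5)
Your proposal is correct and takes essentially the same route as the paper: the paper's proof likewise reuses the multi-table \texttt{COUNT} construction with the rescaled $m_i$ and $k = \min\bigl(m_p, \frac{M\prod_i m_i}{(\epsilon^2+2\epsilon)B}\bigr)$, places the \texttt{SUM} column outside the joining columns, and sets every value on it to $M$, which is exactly your reduction via $B' = B/M$. Your write-up simply makes explicit the arithmetic (type-$0$ join size $B/M$, hence \texttt{SUM} equal to $B$ or at least $(1+\epsilon)^2B$) that the paper leaves implicit.
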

\begin{remark}
If \texttt{SUM} is on the results of a function $f$ (e.g., product) across multiple columns, the theorem still applies if $M$ denotes the maximum value in the range of $f$.
\end{remark}
\begin{proof}
First we assume the \texttt{SUM} column is not among the joining columns.
Following the settings in the proof of \thmref{thm:multi-table}, define $k$ such that
\begin{itemize}
\item if $\prod_{i=1}^{p-1} m_i < (\epsilon^2 + 2\epsilon) B / M$, let $k = \frac{M \cdot \prod_i m_i}{(\epsilon^2 + 2\epsilon) B}$;
\item otherwise, let $k = m_p$.
\end{itemize}
In other words, $k = \min\left(m_p, \frac{\prod_i m_i}{(\epsilon^2 + 2\epsilon) B}\right)$.
The rest of the proof is similar to that in \thmref{thm:multi-table}, except that every value in the \texttt{SUM} column is replaced with $M$.
\end{proof}

\begin{theorem}[multi-table \texttt{COUNT(DISTINCT)}]
Following the setting in \thmref{thm:multi-table}, suppose there are at least two tables joining together. Let $R$ be the table where \texttt{COUNT(DISTINCT)} applies.
Suppose the size of $R$, $n > (1+\epsilon)^2 B$.
Then the length of the bit string that $\Phi$ assigns must be at least
\[
\frac{C(\delta)}{2} \cdot (n - B) \min\left(1, \frac{1}{(\epsilon^2 + 2\epsilon) B}\right).
\]
\end{theorem}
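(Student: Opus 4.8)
The plan is to run the same machinery as in \thmref{thm:2-table} and \thmref{thm:multi-table}: by \thmref{thm:yao} it suffices to build a ``bad'' joint input distribution on which no deterministic scheme whose output on $R$ is shorter than $\beta k$ bits can be correct on more than a $1-2\delta$ fraction, where $\beta$ is taken arbitrarily close to $C(\delta)$, $\alpha=\frac{1}{8\delta}$, and $\mathcal{S}$ is a $(k,\alpha,\beta)$-set of subsets of $T=\{1,\dots,t\}$ with $t=\frac{k}{8\delta}$ provided by \lemref{lem:set}. The only genuinely new ingredient is a gadget that turns \texttt{COUNT(DISTINCT)} into an ``all-or-nothing'' quantity equal either to $B$ or to at least $(1+\epsilon)^2B$; once that is in place the combinatorial counting step is verbatim that of \thmref{thm:2-table}.

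Concretely, set $k=\frac{n-B}{(\epsilon^2+2\epsilon)B}$ if $(\epsilon^2+2\epsilon)B>1$ and $k=n-B$ otherwise, so in either case $\frac{n-B}{k}=\max\!\left(1,(\epsilon^2+2\epsilon)B\right)\ge(\epsilon^2+2\epsilon)B$ and $k=(n-B)\min\!\left(1,\frac{1}{(\epsilon^2+2\epsilon)B}\right)$. Assume first that \texttt{COUNT(DISTINCT)} is applied to a non-joining column $A$ of $R$, and let $R$ join the other $p-1\ge1$ tables on a column $C$. The distribution $d_2$ on $R$: fix $B$ ``base'' rows carrying $B$ distinct values of $A$ and a common join value $\ast$ on $C$; partition the remaining $n-B$ rows into $k$ equal groups, give every one of these rows its own fresh value of $A$, and let all $\frac{n-B}{k}$ rows of the $i$-th group carry the join value $\tau_i$ on $C$, where $\{\tau_1,\dots,\tau_k\}=S$ for a uniformly random $S\in\mathcal{S}$. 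The distribution $d_1$ on the other $p-1$ tables: each of them carries a row (or block of rows) with join value $\ast$, and all of them jointly carry a block of rows with a single common join value $\tau$ drawn uniformly from $T$, with all remaining rows filled by a dummy value not occurring in $R$; since their sizes prescribed by the setting of \thmref{thm:multi-table} are irrelevant to the bound, this padding is harmless. Then the $B$ base rows always survive the join and contribute exactly $B$ distinct $A$-values; if $\tau\in S$ the (unique) group tagged $\tau$ also survives, adding $\frac{n-B}{k}\ge(\epsilon^2+2\epsilon)B$ further distinct values and pushing the answer to at least $(1+\epsilon)^2B$, while if $\tau\notin S$ the answer is exactly $B$. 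The hypothesis $n>(1+\epsilon)^2B$ is exactly what makes this consistent (enough fresh distinct values, and both target answers realizable below $n$). When \texttt{COUNT(DISTINCT)} is instead on the joining column $C$, replace the join value $\tau_i$ of the $j$-th row of group $i$ by the pair $(\tau_i,j)$ and let the $\tau$-block in the helper tables use the pairs $(\tau,1),\dots,(\tau,\frac{n-B}{k})$; a surviving group still contributes $\frac{n-B}{k}$ distinct values of $C$.

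From here the argument is identical to that in the proof of \thmref{thm:2-table}. Partition the relations in the support of $d_2$ into classes according to the bit string $\Phi(R)$ assigns. If $S_i,S_j\in\mathcal{S}$ fall in the same class, then, as $|S_i\setminus S_j|\ge\frac{k}{2}$ and $|S_j\setminus S_i|\ge\frac{k}{2}$, for each $\tau$ in the symmetric difference $S_i\oplus S_j$ the true answer is $B$ for one of $S_i,S_j$ and at least $(1+\epsilon)^2B$ for the other, so the single estimate $D$ produces on that class is off by a factor more than $1+\epsilon$, for at least one of $S_i,S_j$, on more than $\frac{k/2}{k/(8\delta)}=4\delta$ of the relations in $d_1$; hence each class is ``good'' for at most one $S\in\mathcal{S}$. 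If $\Phi(R)$ uses fewer than $\beta k$ bits there are fewer than $\frac12|\mathcal{S}|$ classes, so the fraction of inputs answered incorrectly is at least $2\delta$, and \thmref{thm:yao} promotes the resulting deterministic bound $\beta k=C(\delta)k$ to $\frac{C(\delta)}{2}(n-B)\min\!\left(1,\frac{1}{(\epsilon^2+2\epsilon)B}\right)$, up to the usual rounding and for $n-B$ large enough that \lemref{lem:set} applies.

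The main obstacle is the gadget rather than the counting: one must choose the group size $\frac{n-B}{k}$ precisely so that a single surviving group is already enough to cross the $(1+\epsilon)^2B$ threshold while no \emph{partial} survival of a group is possible (forced by making every row of a group share one join value), and one must guarantee that the $B$ base distinct values are genuinely always present regardless of the helper tables; the \texttt{COUNT(DISTINCT)}-on-a-joining-column variant and the distribution of helper rows among the $p-1$ prescribed-size tables are then only bookkeeping via the small encoding tweaks above.
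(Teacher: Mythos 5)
Your proposal is correct and follows essentially the same route as the paper: the same choice of $k=(n-B)\min\bigl(1,\tfrac{1}{(\epsilon^2+2\epsilon)B}\bigr)$, the same construction of $R$ with $B$ always-joining rows (the paper's type-$0$ entries, your $\ast$) plus $n-B$ rows split into $k$ equal groups tagged by a random $S\in\mathcal{S}$, a helper table drawing a uniform $\tau\in T$, and the verbatim class-counting argument from the two-table \texttt{COUNT} proof. The only difference is that you additionally treat the case where \texttt{COUNT(DISTINCT)} is applied to the joining column, whereas the paper simply constructs the hard instance so that the \texttt{COUNT(DISTINCT)} columns are disjoint from the joining columns, which suffices for a worst-case lower bound.
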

\begin{remark}
In this theorem, we assume \texttt{COUNT(DISTINCT)} is applied on a single column or multiple columns on a single table. When \texttt{COUNT(DISTINCT)} can be applied to multiple columns across multiple tables, the result is similar to that in \thmref{thm:multi-table} (as we can construct the input such that every row in the join result is distinct).
\end{remark}
\begin{proof}
Construct relation $R$ such that: (1) there are at least two columns in $R$; (2) the \texttt{COUNT(DISTINCT)} columns do not have intersection with the joining columns; (3) every row in $R$ is distinct from other rows in terms of the \texttt{COUNT(DISTINCT)} columns.
Define $k$ such that
\begin{itemize}
\item if $1 < (\epsilon^2 + 2\epsilon) B$, let $k = \frac{n - B}{(\epsilon^2 + 2\epsilon) B}$;
\item otherwise, let $k =  n - B$.
\end{itemize}
In other words, $k = (n - B) \min\left(1, \frac{1}{(\epsilon^2 + 2\epsilon) B}\right)$.

The rest of the proof is similar to those in \thmref{thm:2-table} and \thmref{thm:multi-table} with slight modifications.
In terms of the input distributions, let $R'$ be one of the tables that join with $R$ and assume $R'$ has at least two rows. The distribution $d_1$ on the joining column of $R'$ has one entry taking a uniformly randomly chosen value from $T$ and another entry of value of type $0$. The distribution $d_2$ on the joining column of $R$ has $(n-B)$ entries partitioned into $k$ classes (with $k$ different values from a random set uniformly chosen from $\mathcal{S}$ as usual) and another $B$ entries of value of type $0$.
\end{proof}

\begin{theorem}[multi-table \texttt{GROUP-BY}]\label{thm:group-by}
Let $\lambda \ge 1$.
Let $\Phi$ be any scheme which assigns bit strings to database relations, so that there is a function $D$ such that given $p$ relations $R_1$ of size $n_1$, $R_2$ of size $n_2$, \dots, and $R_p$ of size $n_p$, $D(\Phi(R_1), \Phi(R_2), \cdots, \Phi(R_p))$ reports all the groups 
in the \texttt{GROUP-BY} result of $R_1 \Join R_2$ with one-sided error, i.e. it does not report any non-existing group, then in order not to miss a set of groups whose total size is no larger than $\lambda \cdot \frac{\prod_in_i}{\max_i n_i}$ with $1-\delta$ probability, the length of the bit string that $\Phi$ assigns must be at least
\[
\frac{C(\delta)}{2} \cdot \frac{\max_i n_i}{\lambda}.
\]
\end{theorem}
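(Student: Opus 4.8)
The plan is to adapt the now-familiar Yao's-principle-plus-probabilistic-method template used in the proofs of \thmref{thm:2-table} and \thmref{thm:multi-table}, but tailored to the \texttt{GROUP-BY} setting where the ``answer'' is a set of groups rather than a single number. Without loss of generality assume $n_p = \max_i n_i$, and set $k = \frac{n_p}{\lambda}$; the target lower bound on the bit-string length will be $\beta k = C(\delta) \cdot \frac{n_p}{\lambda}$, after which \thmref{thm:yao} finishes. As before, pick $\alpha = \frac{1}{8\delta}$, let $T = \{1, \dots, t\}$ with $t = \frac{k}{8\delta}$, fix $\beta < \frac{1}{2}(\alpha H(\frac{1}{\alpha}) - (\alpha-1)H(\frac{1}{2(\alpha-1)}) - 1)$, and invoke \lemref{lem:set} to get a $(k,\alpha,\beta)$-set $\mathcal{S}$ of subsets of $T$.

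Next I would build the hard input distribution. The group-defining column will live on $R_p$: choose a uniformly random $S \in \mathcal{S}$ and let the first $\lambda k = n_p$ rows of $R_p$ be partitioned into $k$ blocks of equal size $\lambda$, the $k$ blocks carrying the $k$ distinct values of $S$ on the joining column (and these same values serving as the \texttt{GROUP-BY} key), with the other tables $R_1, \dots, R_{p-1}$ arranged (deterministically, or with a single uniformly random ``probe'' value as in \thmref{thm:2-table}) so that each value $v \in T$, if it appears in $R_p$, contributes a group of total size exactly $\frac{\prod_{i<p} n_i}{1} \cdot \lambda / \dots$ --- more precisely so that each of the $k$ groups present has total join size at most $\lambda \cdot \frac{\prod_i n_i}{n_p}$, matching the theorem's threshold. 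The key combinatorial point is the same as before: if $S_i, S_j \in \mathcal{S}$ map under $\Phi$ to the same class, then since $|S_i \setminus S_j| \ge \frac{k}{2}$, there are at least $\frac{k}{2}$ group-values present for the $S_i$-instance but absent for the $S_j$-instance (and vice versa). A deterministic $D$ that commits to one output per class must either report one of those $\frac{k}{2}$ nonexistent groups on the $S_j$-instance --- violating the one-sided (no-false-positive) requirement --- or miss all $\frac{k}{2}$ of them on the $S_i$-instance, and $\frac{k}{2}$ groups have total size roughly $\frac{k}{2} \cdot \lambda \cdot \frac{\prod_i n_i}{n_p} \gg \lambda \cdot \frac{\prod_i n_i}{n_p}$, far exceeding the allowed missed-mass budget. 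So at most one relation per class is ``good'' for more than a $1 - 2\delta$ fraction of the $d_1$-side, and a bit string of length $< \beta k$ yields at most $\frac{1}{2}|\mathcal{S}|$ classes, forcing failure on at least a $2\delta$ fraction; \thmref{thm:yao} then gives the claimed bound.

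The main obstacle I anticipate is getting the size-accounting exactly right so that the missed-mass threshold in the theorem statement ($\lambda \cdot \frac{\prod_i n_i}{\max_i n_i}$) is the true per-\emph{group} size (or a clean multiple of it) while simultaneously each group that is present has join size no larger than that threshold --- this requires the $\lambda$-sized blocks on $R_p$ to join against exactly the right number of rows in the other tables, and one has to be careful about whether the ``type $0$'' padding rows on the smaller tables contribute extra spurious groups. A secondary subtlety is that, unlike the \texttt{COUNT} case, the error condition is asymmetric (false positives are forbidden outright, false negatives are budgeted), so I need to argue the $2\delta$-fraction-of-failures bound cleanly in both the missing-groups regime and the spurious-group regime; but since a single class collision already produces, for \emph{each} of the two instances, a violation (spurious group on one, $\frac{k}{2}$ missed groups on the other), the union-bound bookkeeping carries over essentially verbatim from \thmref{thm:2-table}.
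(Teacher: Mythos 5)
Your high-level template (Yao's principle plus a $(k,\alpha,\beta)$-set with $k = \max_i n_i/\lambda$) is the right one, but the hard distribution you sketch does not work, and the paper's proof uses a genuinely different one. You want all $k$ groups of the random set $S$ to be present simultaneously, with the other tables arranged so that ``each value $v\in T$, if it appears in $R_p$, contributes a group.'' That forces every one of the $t = \alpha k = \frac{\max_i n_i}{8\delta\lambda}$ candidate values of $T$ to occur in each of $R_1,\dots,R_{p-1}$; since $\alpha = \frac{1}{8\delta} > 2$ for $\delta < 6.25\%$, we have $t > \frac{2\max_i n_i}{\lambda} \ge \max_i n_i \ge n_i$ whenever $\lambda \le 2$, so the smaller tables cannot contain all of $T$, and the construction is infeasible in exactly the regime (small $\lambda$) where the theorem is strongest. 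Even granting feasibility, the mass accounting is inconsistent: if all $k$ groups are present and each has size equal to the threshold $\lambda\prod_i n_i/\max_i n_i$, the total join size would be $k\cdot\lambda\prod_i n_i/\max_i n_i = \prod_i n_i$, i.e.\ the full Cartesian product, which is incompatible with $R_p$ carrying $k\ge 2$ distinct join values; if instead each group is small, missing $k/2$ of them need not exceed the stated budget.

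The paper resolves this by randomizing the \emph{other} side: in the two-table case, all of $R_1$'s joining column takes a single value $v$ drawn uniformly from $T$, while $R_2$ is partitioned into $k$ blocks of size $\lambda$ labelled by a random $S\in\mathcal{S}$. Then at most one group ever exists (of size exactly $\lambda n_1$); for two sets $S_i, S_j$ colliding in the same class, the one-sided-error requirement forces $D$ to answer ``no group'' for every probe $v \in S_i \oplus S_j$, and the failure probability is obtained as a $4\delta$ fraction over the random probe $v$ (using $|S_i\setminus S_j|\ge k/2$ out of $t = k/(8\delta)$ values), not as a large missed mass on a single fixed instance. Your fallback remark about ``a single uniformly random probe value'' points at this construction, but your stated combinatorial argument (missing $k/2$ simultaneously present groups whose total mass far exceeds the budget) does not apply to it; it must be replaced by the fraction-of-probes argument just described.
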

\begin{proof}
For simplicity of discussion, we prove the theorem for the two-table case (with $n_2 \ge n_1$) where a single group of size $\lambda n_1$ should not be missed. The extension for multiple tables can be achieved similar to the extension from \thmref{thm:2-table} to \thmref{thm:multi-table}; the extension for multiple groups can be achieved by adding a \texttt{GROUP-BY} column and subdividing a single group.

In the two-table case, let $k = \frac{n_2}{\lambda}$. Also define $t, \alpha, \beta$, and construct $T, \mathcal{S}$ similarly.
Let $d_1$ be the distribution on $R_1$ such that all entries on the joining column are of the same value uniformly chosen from $T$. Let $d_2$ be the distribution on $R_2$ such that entries on the joining column are partitioned into $k$ classes of the same size $\lambda$ and entries in each class have the same value, and the $k$ different values are from a uniformly chosen set $S \in \mathcal{S}$. 

When $R_1$ and $R_2$ are independently from $d_1$ and $d_2$, respectively, the join size is either $0$ or $\lambda n_1$. Suppose in the latter case, the results are always in the same single group. Then in the result, there is either no group existing, or a group of size $\lambda n_1$.

Consider partitioning the relations into classes according to the bit string assigned by $\Phi$. For each relation in $d_1$, the function $D$ gives the same estimate for all relations in $d_2$ in the same class. However, for each class, there can be at most one relation in $d_2$ for which the no group is missed for more than $1 - 2\delta$ of the relations in $d_1$.

To see this, consider $S_i, S_j \in \mathcal{S}$ such that the corresponding relations in $d_2$ map to the same class, and let $T' = S_1 \oplus S_2$. For each relation $R_1$ from $d_1$ whose column $C$ has value in $T'$, the join result is ``no group exists'' for one of $S_i, S_j$ and ``there exists a group of size $\lambda n_1$'' for the other.
Since we assume the approximation algorithms only have one-sided error, i.e., it does not report any non-existing group, any such algorithm need to report ``no group exists'' for relations from $T'$.
By properties of $\mathcal{S}$, $|S_1 \setminus S_2| \ge \frac{k}{2}$ and $|S_2 \setminus S_1| \ge \frac{k}{2}$, and hence for one of them, the estimate will miss a group of size $\lambda n_1$ for more than $\frac{k}{2} / \frac{k}{8\delta} = 4\delta$ of the relations in $d_1$.

When the length of bit strings is less than $\beta k$, i.e at most $\beta k - 1$, then the number of distinct classes it can express is at most $\frac{1}{2}$ of $|\mathcal{S}|$.
Thus the proportion of input that the estimate will miss a group of size $\lambda n_1$ is at least $4\delta \cdot \frac{1}{2} = 2\delta$.
Therefore, the length of the bit strings should be at least $\beta k = C(\delta) \cdot \frac{n_2}{\lambda}$.
Applying \thmref{thm:yao} gives the result.
\end{proof}

\bigskip

\section{Hardness of AQP with additional restrictions or information}\label{sec:hardness_additional}
The lower bounds in \secref{sec:hardness_common} on various type of query functions still hold when \texttt{WHERE} clauses are allowed, as the existence of selection does not make the problems easier.
In \thmref{thm:pk-fk} and \thmref{thm:pk-fk_group-by}, we study PK-FK join.
We note that when no selection is applied on the query, the result of query function such as \texttt{COUNT} only depends on the fact table, and hence lower bounds for a single relation (without join) holds.
Therefore, our results assume that selection is present.

\begin{definition}
Given $\delta < 6.25\%$, define $C'(\delta) = 8\delta \cdot C(\delta)$.
\end{definition}

\begin{remark}
When $1 - \delta = 95\%$, $C'(\delta) > \frac{1}{125}$; when $1 - \delta = 99\%$, $C'(\delta) > \frac{1}{25}$.
\end{remark}

\begin{theorem}[PK-FK join with selection \texttt{COUNT}]\label{thm:pk-fk}
Let $\Phi$ be any scheme which assigns bit strings to database relations, so that there is a function $D$ such that given a dimension table of size $n_D$ and a fact table of size $n_F$, $D(\Phi(R_1), \Phi(R_2))$ gives an $(\epsilon, \delta)$-approximation for the \texttt{COUNT} result of $R_1 \Join R_2$ (join size), when an a priori lower bound $B < \frac{n_F}{(1+\epsilon)^2}$ is given on the join size. Then the length of the bit string that $\Phi$ assigns must be at least
\[
\frac{C'(\delta)}{2} \cdot (n_D - 1).
\]
\end{theorem}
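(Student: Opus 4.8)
The plan is to reuse the Yao-plus-probabilistic-method scaffolding of \thmref{thm:2-table}, but to engineer the hard instance so that the \emph{dimension} table's primary-key domain plays the role of the ground set of a $(k,\alpha,\beta)$-set. First I would set $\alpha=\frac{1}{8\delta}$ (note that $\alpha\ge 2$ exactly because $\delta<6.25\%$), take $k=8\delta(n_D-1)$ so that $\alpha k=n_D-1$, and fix $\beta$ just below $\frac12\left(\alpha H\left(\frac1\alpha\right)-(\alpha-1)H\left(\frac{1}{2(\alpha-1)}\right)-1\right)=C(\delta)$. By \lemref{lem:set}, for $n_D$ sufficiently large there is a $(k,\alpha,\beta)$-set $\mathcal{S}$: a family of $2^{\beta k}$ size-$k$ subsets of $\{1,\dots,n_D-1\}$ with pairwise intersections at most $k/2$.

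Next I would specify the bad distribution, with the selection placed on the dimension table (so that the \texttt{COUNT} genuinely depends on that table). The dimension table $R_1$ always has primary keys $\{0,1,\dots,n_D-1\}$ plus one extra column $C$, and the query carries \texttt{WHERE} $R_1.C=1$. Under $d_1$ we draw $S\in\mathcal{S}$ uniformly, set $C=1$ on the row $PK=0$ and on every row $PK=j$ with $j\in S$, and set $C=0$ elsewhere. Under $d_2$ we draw $j^\star\in\{1,\dots,n_D-1\}$ uniformly and take the fact table $R_2$ to consist of $B$ rows with foreign key $0$ together with $n_F-B$ rows with foreign key $j^\star$ (this respects referential integrity and uses $n_F-B>0$, which holds since $B<\frac{n_F}{(1+\epsilon)^2}<n_F$). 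The \texttt{COUNT} result of $R_1\Join R_2$ under this selection is then exactly $B$ when $j^\star\notin S$ and exactly $n_F$ when $j^\star\in S$; in either case it is at least $B$, so $B$ is a legitimate a priori bound, and because $B<\frac{n_F}{(1+\epsilon)^2}$ the intervals $\left[\frac{B}{1+\epsilon},(1+\epsilon)B\right]$ and $\left[\frac{n_F}{1+\epsilon},(1+\epsilon)n_F\right]$ are disjoint, so no single number is an $(\epsilon,\delta)$-good estimate for both cases.

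The counting argument then follows \thmref{thm:2-table} almost verbatim. Partition the dimension tables in the support of $d_1$ into classes by the string $\Phi(R_1)$; for a fixed class the output of $D$ is the same for every member $S$ of the class and is a function of $j^\star$ alone. If $S_i,S_j$ lie in the same class then for every $j^\star\in S_i\oplus S_j$ the two true answers differ, so $D$ errs on at least one of the two corresponding inputs; since $|S_i\oplus S_j|\ge k$ and $j^\star$ ranges over $n_D-1=\alpha k$ values, one of $S_i,S_j$ is answered incorrectly on a fraction at least $\frac{k/2}{\alpha k}=4\delta$ of the fact tables. Hence each class contains at most one $S$ that is answered correctly on more than $1-2\delta$ of the fact tables, so if $\Phi(R_1)$ had length below $\beta k$ the number of classes would be at most $\tfrac12|\mathcal{S}|$, leaving at least $\tfrac12|\mathcal{S}|$ of the $S$'s bad and forcing overall error at least $4\delta\cdot\tfrac12=2\delta$ on $d_1\times d_2$. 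Thus any deterministic scheme with error at most $2\delta$ on this distribution must spend at least $\beta k=C(\delta)\cdot 8\delta(n_D-1)=C'(\delta)(n_D-1)$ bits on the dimension table, and \thmref{thm:yao} turns this into the claimed bound $\tfrac{C'(\delta)}{2}(n_D-1)$.

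The step I expect to be the real obstacle is the design of the instance, not the bookkeeping. The point is that a bound linear in $n_D$ rather than in $n_F$ must come from making the hard fact-table distribution have only about $n_D$ genuinely different configurations---a single random target $j^\star$ plus a fixed, always-selected anchor $PK=0$ carrying the $B$ ``floor'' rows---so that it is $\Phi(R_1)$, not $\Phi(R_2)$, that is forced to be long; the anchor row is exactly what keeps the answer from ever dropping below $B$. This choice also pins the ground set of $\mathcal{S}$ to size $n_D-1$, hence $k=8\delta(n_D-1)$, which is precisely why the constant degrades from $C(\delta)$ to $C'(\delta)=8\delta\,C(\delta)$; once the instance is right, checking $\alpha\ge 2$, invoking \lemref{lem:set}, and redoing the error accounting are routine given the proof of \thmref{thm:2-table}.
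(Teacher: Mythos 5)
Your proposal is correct and follows essentially the same route as the paper's proof: an always-selected anchor key carrying the $B$ floor rows, a random subset $S\in\mathcal{S}$ of the remaining $n_D-1$ dimension keys determining the selection, a single uniformly random probe key $j^\star$ in the fact table, and the counting argument of \thmref{thm:2-table} with $k=8\delta(n_D-1)$ yielding $\beta k = C'(\delta)(n_D-1)$ before applying \thmref{thm:yao}. The only difference is cosmetic (you swap the labels $d_1$, $d_2$ relative to the paper and relabel the anchor key $n_D$ as $0$).
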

\begin{remark}
For the star schema with one fact table of size $n_F$ and $p$ dimension tables of sizes $n_{D_1}, n_{D_2}, \dots, n_{D_p}$, the theorem still holds with $n_D$ replaced by $\max_i n_{D_i}$.
\end{remark}
\begin{proof}
Let $T = \{1, 2, \dots, n_D\}$ be the key set of the dimension table.
Define $t = n_D - 1$ and $k = 8\delta (n_D - 1)$. Also define $\alpha, \beta$, and the set $\mathcal{S}$ similarly as in \thmref{thm:2-table}. Let $d_1$ be the distribution on the fact table such that
\begin{itemize}
\item the first $(n_F - B)$ entries on the foreign key column takes values from $T\setminus\{n_D\}$ uniformly at random;
\item the remaining $B$ entries on the foreign key column take the value $n_D$.
\end{itemize}
Let $d_2$ be the distribution on the dimension table such that
\begin{itemize}
\item among the first $n_D - 1$ entries, given a uniformly chosen subset $S \in \mathcal{S}$ (of size $k$), those entries corresponding to $S$ are selected;
\item the last entry with key value $n_D$ is always selected.
\end{itemize}
The rest of the proof follows that in \thmref{thm:2-table}.
\end{proof}

\begin{theorem}[PK-FK join with selection \texttt{GROUP-BY}]\label{thm:pk-fk_group-by}
Let $\lambda \ge 1$. Following the setting in \thmref{thm:pk-fk} and notations in \thmref{thm:group-by}, then in order not to miss a set of groups whose total size is no larger than $n_F$ with $1-\delta$ probability, the length of the bit string that $\Phi$ assigns must be at least
\[
\frac{C'(\delta)}{2} \cdot n_D.
\]
\end{theorem}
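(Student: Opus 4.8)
The plan is to fuse the reduction from the proof of \thmref{thm:pk-fk} with the one-sided-error argument from the proof of \thmref{thm:group-by}, and then push the result through \thmref{thm:yao}. From \thmref{thm:pk-fk} I borrow the idea that the primary-key constraint on the dimension table forces the ground set of the $(k,\alpha,\beta)$-set to be (essentially) the dimension key set itself, which is exactly why the constant degrades from $C(\delta)$ to $C'(\delta)=8\delta\,C(\delta)$. From \thmref{thm:group-by} I borrow the observation that, under a one-sided miss metric, a scheme cannot tell the ``no group'' case from the ``one large group'' case unless it has enough bits to separate the underlying selected key sets. So the goal is: exhibit a bad joint distribution on a dimension table and a fact table on which any \emph{deterministic} scheme using fewer than $\beta k$ bits misses a qualifying group on more than a $2\delta$ fraction of inputs, then invoke Yao.

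Concretely, let $T=\{1,\dots,n_D\}$ be the key set of the dimension table, set $\alpha=\frac1{8\delta}$ and $k=8\delta\,n_D$ (so $\alpha k=n_D$), fix $\beta<\frac12\!\left(\alpha H(\tfrac1\alpha)-(\alpha-1)H(\tfrac1{2(\alpha-1)})-1\right)$, and apply \lemref{lem:set} to obtain a $(k,\alpha,\beta)$-set $\mathcal S$ of subsets of $T$ (legitimate once $n_D$ exceeds the threshold $k_0(\alpha,\beta)$). Unlike in \thmref{thm:pk-fk}, no key needs to be reserved as ``always selected'': the lower bound $B$ was required there only to keep the \texttt{COUNT} above its a priori floor, whereas the \texttt{GROUP-BY} miss metric is vacuous on an empty join, so the entire ground set of size $n_D$ is available and $\beta k=C'(\delta)\,n_D$ rather than $C'(\delta)(n_D-1)$. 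The bad distribution is then: $d_2$ picks $S\in\mathcal S$ uniformly and makes the \texttt{WHERE} clause select exactly the $k$ dimension rows whose primary key lies in $S$; $d_1$ gives every one of the $n_F$ fact rows the same foreign key $\tau$, uniform over $T$, together with a constant \texttt{GROUP-BY} value, so that $R_1\Join R_2$ is either empty (when $\tau\notin S$) or a single group of size $n_F$ (when $\tau\in S$) --- exactly ``a set of groups of total size $\le n_F$''. The parameter $\lambda$ never enters the bound; if one insists on many small groups one subdivides the \texttt{GROUP-BY} column into $\lambda$-sized blocks as in \thmref{thm:group-by}, which changes nothing else.

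The counting step mirrors \thmref{thm:group-by}: partition the dimension tables by the string $\Phi$ assigns them. If $S_i,S_j$ land in the same class and $\tau\in S_i\oplus S_j$, the true result is a group of size $n_F$ for one of $S_i,S_j$ and no group for the other; a scheme forbidden to report a non-existing group must answer ``no group'' in the empty case and hence misses the group in the other. Since $|S_i\setminus S_j|\ge\frac k2$ and $|S_j\setminus S_i|\ge\frac k2$ while $|T|=\frac k{8\delta}$, for at least one of $S_i,S_j$ a group is missed on more than $\frac{k/2}{k/(8\delta)}=4\delta$ of the fact tables, so each class contains at most one dimension table on which no group is missed for more than $1-2\delta$ of $d_1$. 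When $\Phi$'s strings are shorter than $\beta k$ there are at most $\tfrac12|\mathcal S|$ classes, so on at least a $2\delta$ fraction of the joint input a qualifying group is missed; \thmref{thm:yao} then converts this into the claimed lower bound $\frac{C'(\delta)}2\,n_D$.

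The part I expect to require the most care is the interface between the set-valued output and the one-sided-error requirement: making precise that ``do not report a non-existing group'' genuinely pins the scheme's hand on the ``empty vs.\ single group'' dichotomy, so that the matching/vertex-cover bookkeeping turning ``at most one good table per class'' into an overall $2\delta$ failure is valid. Beyond that, only the usual rounding and asymptotic caveats remain ($k\in\mathbb Z$, $|\mathcal S|=\lfloor2^{\beta k}\rfloor$, $n_D\ge k_0(\alpha,\beta)$), which are handled exactly as in the earlier proofs.
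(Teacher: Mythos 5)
Your proof is correct and is exactly the combination the paper intends (its own proof is the one-line remark that one merges the constructions of \thmref{thm:group-by} with $\lambda=1$ and \thmref{thm:pk-fk}); you have filled in the details faithfully, including the right reason why the ground set is all of $\{1,\dots,n_D\}$ here (no key need be reserved to enforce a join-size floor), which is what yields $n_D$ in place of $n_D-1$.
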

\begin{remark}
For the star schema with one fact table of size $n_F$ and $p$ dimension tables of sizes $n_{D_1}, n_{D_2}, \dots, n_{D_p}$, the theorem still holds with $n_D$ replaced by $\max_i n_{D_i}$.
\end{remark}
\begin{proof}
The proof is a combination of those in \thmref{thm:group-by} (with $\lambda = 1$) and \thmref{thm:pk-fk}.
\end{proof}

\begin{theorem}[2-table \texttt{COUNT} with top-$K$ heavy hitters]\label{thm:heavy_hitter}
Let $\Phi$ be any scheme which assigns bit strings to database relations, so that there is a function $D$ such that given two relations $R_1$ and $R_2$ with the frequency vectors of the top-$K$ frequent elements (heavy hitters) $\mathbf{a} = (a_1, a_2, \dots, a_K)$ and $\mathbf{b} = (b_1, b_2, \dots, b_K)$ on the joining columns respectively, $D(\Phi(R_1), \Phi(R_2))$ gives an $(\epsilon, \delta)$-approximation for the \texttt{COUNT} result of $R_1 \Join R_2$ (join size), when an a priori lower bound $B$ is given on the join size. Denote the size of $R_1$ by $n_1$ (excluding the size of heavy hitters $\sum_{i=1}^K a_i$) and the size of $R_2$ by $n_2$ (excluding the size of heavy hitters $\sum_{i=1}^K b_i$) and assume $a_1 \ge a_2 \ge \dots \ge a_K \ge 1$ and $b_1 \ge b_2 \ge \dots \ge b_K \ge 1$. Then when the following conditions hold,
\begin{itemize}
\item
$a_K b_K \ge (\epsilon^2 + 2\epsilon) B$;
\item
$n_1 \ge \max\left(2, 1 + \frac{1}{\epsilon^2 + 2\epsilon}\right) a_K$;
\item
$n_2 \ge \max\left(2, 1 + \frac{1}{\epsilon^2 + 2\epsilon}\right) b_K$.
\end{itemize}
the length of the bit string that $\Phi$ assigns must be at least
\[
\frac{C(\delta)}{2} \cdot \max\left(m_1 \min\left(1, \frac{b_K}{(\epsilon^2 + 2\epsilon) B}\right), m_2 \min\left(1, \frac{a_K}{(\epsilon^2 + 2\epsilon) B}\right)\right),
\] where $m_1 = n_1 - \frac{B}{b_K}$, and $m_2 = n_2 - \frac{B}{a_K}$.
\end{theorem}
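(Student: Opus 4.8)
The plan is to mimic the construction behind \thmref{thm:2-table} and \thmref{thm:multi-table}, with two changes dictated by the heavy-hitter side information. First, every relation in the hard distribution will carry a \emph{fixed} block of rows realizing exactly the prescribed frequency vectors $\mathbf a$ and $\mathbf b$ on the joining columns; since this block is identical across the distribution, the information $\mathbf a,\mathbf b$ handed to the algorithm is worthless, and Yao's principle (\thmref{thm:yao}) applies exactly as before. Second, the a priori floor $B$ on the join size will be realized not by joining a type-$0$ block against another type-$0$ block, but by joining a small block of \emph{light} rows of one table against heavy hitters of the other table (whose total frequency on the matched values is $b_K$, respectively $a_K$). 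Because the claimed bound is a maximum of two terms that are symmetric under swapping $R_1$ and $R_2$, I would run the argument twice: Construction~A bounds $|\Phi(R_1)|$ and produces the term containing $m_1$ and $b_K$, and Construction~B is the mirror image, producing the term with $m_2$ and $a_K$. I describe Construction~A.

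Put $\mu=\min\!\bigl(b_K,(\epsilon^2+2\epsilon)B\bigr)$ and $k=m_1\min\!\bigl(1,\tfrac{b_K}{(\epsilon^2+2\epsilon)B}\bigr)$, so the per-class multiplicity $m_1/k=\max\!\bigl(1,\tfrac{(\epsilon^2+2\epsilon)B}{b_K}\bigr)$ satisfies $(m_1/k)\cdot\mu=(\epsilon^2+2\epsilon)B$. Take $t=k/(8\delta)$, $\alpha=1/(8\delta)$, fix $\beta$ just below the threshold of \lemref{lem:set}, and let $\mathcal S$ be a $(k,\alpha,\beta)$-set of subsets of a type set $T$ with $|T|=t$. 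Define $\mathcal D_1$ on $R_1$ by: (i) $\sum_i a_i$ heavy-hitter rows on $K$ fixed distinct values with multiplicities $a_1,\dots,a_K$; (ii) $m_1$ ``interesting'' rows split into $k$ equal classes whose values form a uniformly random $S\in\mathcal S$ (each value of $S$ occurring $m_1/k$ times); (iii) the remaining $B/b_K$ rows placed on values that also occur, with total frequency $b_K$, among $R_2$'s heavy hitters, so that block (iii) contributes exactly $B$ to the join. Define $\mathcal D_2$ on $R_2$ by: its $\sum_i b_i$ heavy-hitter rows fixed as prescribed (one of whose values is matched by (iii)); $\mu$ ``interesting'' rows all on a single value $\tau$ drawn uniformly from $T$; and the remaining rows as type-$0$ padding on fresh values. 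Then the \texttt{COUNT} of $R_1\Join R_2$ equals exactly $B$ if $\tau\notin S$ and exactly $(1+\epsilon)^2B$ if $\tau\in S$. From here the counting is taken verbatim from \thmref{thm:2-table}: if $|\Phi(R_1)|<\beta k$ then two sets $S_i,S_j$ collide under $\Phi$; $D$ must return the same estimate for both; for every $\tau\in S_i\oplus S_j$ no single value lies within a $1+\epsilon$ factor of both $B$ and $(1+\epsilon)^2B$; since $|S_i\oplus S_j|\ge k$ and $|T|=k/(8\delta)$, for at least one of $S_i,S_j$ this fails on a $\ge k/(2t)=4\delta$ fraction of $\mathcal D_2$; combined with the fact that at least half of $\mathcal S$ lies in a nontrivial $\Phi$-class, the total error is $\ge 2\delta$; and \thmref{thm:yao} turns $\beta k=C(\delta)k$ into the stated $\tfrac{C(\delta)}{2}\,m_1\min\!\bigl(1,\tfrac{b_K}{(\epsilon^2+2\epsilon)B}\bigr)$. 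Construction~B is obtained by exchanging the roles of $R_1$ and $R_2$ throughout.

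The crux — and the reason the three hypotheses are stated the way they are — is checking that this recipe actually describes a legitimate pair of relations of the prescribed sizes whose top-$K$ frequency profiles are exactly $\mathbf a$ and $\mathbf b$. The interesting multiplicity $m_1/k$ must be at most $a_K$, so no interesting value outranks a heavy hitter of $R_1$; this is precisely $a_Kb_K\ge(\epsilon^2+2\epsilon)B$. The absorbing block (iii) must fit inside the $n_1$ non-heavy rows while leaving $m_1\ge k$ interesting rows, and its $B/b_K$ rows must be spreadable over values of frequency $\le a_K$; this is exactly what $n_1\ge\max\!\bigl(2,1+\tfrac1{\epsilon^2+2\epsilon}\bigr)a_K$ buys, using $B/b_K\le a_K/(\epsilon^2+2\epsilon)$. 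Symmetrically, the rows these matched values induce in $R_2$ (totalling $\approx B/a_K$ rows, each value at frequency $\le b_K$) together with the $\mu\le b_K$ interesting rows must fit inside the $n_2$ non-heavy rows of $R_2$, which is what $n_2\ge\max\!\bigl(2,1+\tfrac1{\epsilon^2+2\epsilon}\bigr)b_K$ buys. I expect this bookkeeping — including the mild integrality slack (ceilings, and values tied at frequency exactly $a_K$ or $b_K$, which are harmless because $\mathbf a,\mathbf b$ only report the $K$ largest frequencies with $\ge$) — to be the only genuinely new work; the pigeonhole/union-bound counting and the appeal to Yao's principle are imported unchanged from \thmref{thm:2-table}.
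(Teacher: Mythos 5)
Your overall strategy --- fix the heavy-hitter block, build a $(k,\alpha,\beta)$-set over fresh types, create a two-point gap $B$ versus $(1+\epsilon)^2B$, and import the pigeonhole/Yao counting from \thmref{thm:2-table} --- is the same as the paper's, and your probe/class multiplicities are chosen correctly: the per-class multiplicity $\max\bigl(1,\tfrac{(\epsilon^2+2\epsilon)B}{b_K}\bigr)$ stays below $a_K$ exactly because $a_Kb_K\ge(\epsilon^2+2\epsilon)B$, and the product with $\mu$ gives the required $(\epsilon^2+2\epsilon)B$ gap. But there is a genuine gap in how you realize the a priori floor $B$. Your block~(iii) routes the floor through $R_2$'s heavy hitters: $B/b_K$ rows of $R_1$ placed on values whose total frequency among $R_2$'s heavy hitters is $b_K$. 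Since every $b_i\ge b_K$, a subset of heavy hitters with total frequency $b_K$ is a single value of frequency exactly $b_K$, so block~(iii) is $B/b_K$ copies of one value in $R_1$. The hypotheses only give $B/b_K\le a_K/(\epsilon^2+2\epsilon)$, which exceeds $a_K$ whenever $\epsilon^2+2\epsilon<1$; that value would then outrank the declared top-$K$ heavy hitters of $R_1$, so the instance no longer has profile $\mathbf a$. Your proposed repair --- spreading the $B/b_K$ rows over several values of frequency $\le a_K$ --- does not go through either, because each spread value must still be matched in $R_2$ at frequency $b_K$, and $R_2$ may have only one heavy hitter at that frequency (e.g.\ $K=1$): the maximum floor realizable against $R_2$'s heavy hitters while respecting $\mathbf a$ is $a_K\sum_i b_i$, which can be smaller than $B$.

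The paper avoids this entirely: it assumes the heavy hitters of the two relations are disjoint (so they contribute nothing to the join) and realizes the floor with \emph{fresh} values outside $\mathbf a$, $\mathbf b$, and $T$, namely $B/(a_Kb_K)$ matched blocks of size $a_K$ in $R_1$ and size $b_K$ in $R_2$ (with a degenerate single-block variant when $B<a_Kb_K$). Every fresh value then has frequency at most $a_K$ in $R_1$ and at most $b_K$ in $R_2$, so the top-$K$ profiles are preserved, and the second and third hypotheses are exactly what guarantees the $a_K+B/b_K$ (respectively $b_K+B/a_K$) rows fit in the non-heavy budget. If you replace your block~(iii) by this fresh-value pairing (which costs $B/a_K$ rows of $R_2$'s budget in your Construction~A, still affordable), the rest of your argument is sound and matches the paper's.
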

\begin{remark}
Here is a special case of the above theorem.
Let $\epsilon = 1$, $a_K = \frac{1}{4}n_1$, and $b_K = \frac{1}{4}n_2$. By assumption, we need $B \le \frac{1}{20}n_1 n_2$. We further assume $n_1 < n_2$. Then the theorem says that to get an $\left(1, \delta \right)$-approximation for the join size, the length of the bit string we need is at least $\frac{C(\delta)}{2} \min\left(n_2 - \frac{4B}{n_1}, \frac{4n_1n_2}{B} - 1\right)$.
\end{remark}
\begin{proof}
We only show the construction for $\frac{C(\delta)}{2} \cdot m_2 \min\left(1, \frac{a_K}{(\epsilon^2 + 2\epsilon) B}\right)$, the other part is symmetric.

Assume $\mathbf{a}$ and $\mathbf{b}$ have no intersections, i.e. the heavy hitters in the joining column of one relation do not appear in that of the other relation.
Define $k$ such that
\begin{itemize}
\item if $a_K < (\epsilon^2 + 2\epsilon) B$, let $k = \frac{a_K m_2}{(\epsilon^2 + 2\epsilon) B}$;
\item otherwise, let $k = m_2$.
\end{itemize}
In other words, $k = m_2 \min\left(1, \frac{a_K}{(\epsilon^2 + 2\epsilon) B}\right)$.
To make the theorem meaningful we need $k \ge 1$, and one can check this is always satisfied if $a_K b_K \ge (\epsilon^2 + 2\epsilon) B$ and $n_2 \ge \max\left(2, 1 + \frac{1}{\epsilon^2 + 2\epsilon}\right) b_K$ as we have assumed.
Define $t, \alpha, \beta$ and construct $T, \mathcal{S}$ similarly as in the proof of \thmref{thm:2-table} such that $T$ does not have intersection with the values in $\mathbf{a}$ and $\mathbf{b}$.

To construct the input distribution $d_1$ for table $R_1$, divide the entries in the joining column of $R_1$ into four parts and assign different distributions to them as follows:
\begin{enumerate}
\item the first $\sum_{i=1}^K a_i$ entries are fixed as the frequency vector $\mathbf{a}$ indicates;
\item the following $a_K$ entries take the same value uniformly randomly chosen from $T$;
\item if $B \ge a_K b_K$, then the following $\frac{B}{b_K}$ entries are divided into $\frac{B}{a_K b_K}$ blocks of size $a_K$, each has a distinct value different from values in $\mathbf{a}$, $\mathbf{b}$, and $T$; otherwise (i.e. $B < a_K b_K$), the following $a_K$ entries are filled with the same value different from values in $\mathbf{a}$, $\mathbf{b}$, and $T$;
\item the rest entries (if any) are assigned values not seen anywhere, without a single value appear more than $a_K$ times.
\end{enumerate}
For part (2)(3)(4) to hold we need $n_1 \ge a_K + \frac{B}{b_K}$ if $B \ge a_K b_K$ and $n_1 \ge 2a_K$ if $B < a_K b_K$, and one can check this is always satisfied if $a_K b_K \ge (\epsilon^2 + 2\epsilon) B$ and $n_1 \ge \max\left(2, 1 + \frac{1}{\epsilon^2 + 2\epsilon}\right) a_K$ as we have assumed.

To construct the input distribution $d_2$ for table $R_2$, divide the entries in the joining column of $R_2$ into three parts and assign different distributions to them as follows:
\begin{enumerate}
\item the first $\sum_{i=1}^K b_i$ entries are fixed as the frequency vector $\mathbf{b}$ indicates;
\item the following $m_2 = n_2 - \frac{B}{a_K}$ entries are partitioned into $k$ classes of the same size $\frac{m_2}{k}$ and entries in each class have the same value, and the $k$ different values are from a uniformly chosen set $S \in \mathcal{S}$;
\item if $B \ge a_K b_K$, then the last $\frac{B}{a_K}$ entries are divided into $\frac{B}{a_K b_K}$ blocks of size $b_K$, with values corresponding to the $\frac{B}{a_K b_K}$ blocks in part (3) of $R_1$; otherwise (i.e. $B < a_K b_K$), the last $\frac{B}{a_K}$ entries are filled with the same value corresponding to the single value in part (3) of $R_1$.
\end{enumerate}
For part (2), if $a_K < (\epsilon^2 + 2\epsilon) B$, $k = \frac{a_K m_2}{(\epsilon^2 + 2\epsilon) B}$ and thus the block size $\frac{m_2}{k} = \frac{(\epsilon^2 + 2\epsilon)B}{a_K}$ must be no more than $b_K$ (because otherwise the block size is larger than some of the top-$K$ heavy hitters), which is satisfied as we assume $a_K b_K \ge (\epsilon^2 + 2\epsilon) B$.

The rest of the proof follows that in \thmref{thm:2-table}.
\end{proof}

We know that with the full frequency information, the join size of two tables can be computed instantly. The following theorem demonstrates that when there are multiple tables, even with all the frequency vectors on the joining columns, computing the join size might be hard in the worst case.

\begin{theorem}[4-table chain join \texttt{COUNT} with all frequency vectors]\label{thm:frequency}
Let $\Phi$ be any scheme which assigns bit strings to database relations, so that there is a function $D$ such that given four relations $R_1, R_2, R_3, R_4$, each of size $n$, $D(\Phi(R_1), \Phi(R_2), \Phi(R_3), \Phi(R_4))$ gives an $(\epsilon, \delta)$-approximation for the \texttt{COUNT} result of chain join of the four tables (join size), when an a priori lower bound $B \le \frac{n^2}{\epsilon^2 + 2\epsilon}$ is given on the join size, then the length of the bit string must be at least
\[
\frac{C'(\delta)}{2} \cdot \left( n - \frac{\sqrt{B}}{n - \sqrt{(\epsilon^2 + 2\epsilon) B}} \right).
\]
\end{theorem}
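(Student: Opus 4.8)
The plan is to follow the architecture of the proof of \thmref{thm:2-table} --- a ``bad'' input distribution, a counting argument on the classes induced by $\Phi$, and \thmref{thm:yao} at the end --- but to hide all the hardness in the \emph{joint} distributions of the two middle tables of the chain, since these are exactly what the per-column frequency vectors fail to pin down. Write the chain join as $R_1 \Join_A R_2 \Join_B R_3 \Join_C R_4$, so that $R_1$ has the one joining column $A$, $R_2$ has columns $A$ and $B$, $R_3$ has columns $B$ and $C$, and $R_4$ has the one joining column $C$. Four tables are really needed: with only three, $R_2$ would be the sole table carrying a hidden joint distribution, and a decoder that knows the (given) marginals of $R_2$ and $R_3$ on the middle column can recover the answer from only $O(\tfrac1\epsilon\log n)$ bits; it is the fact that in a four-table chain \emph{both} $R_2$ and $R_3$ hide a joint distribution --- and neither of $\Phi(R_2),\Phi(R_3)$ sees the other table --- that keeps the problem hard even with all frequency vectors given.

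Concretely, I would make $R_1$ and $R_4$ almost pure ``broadcast'' tables: essentially all of $R_1$'s rows carry a single value $\star$ on $A$, essentially all of $R_4$'s rows carry a single value $\star'$ on $C$, with a few rows set aside on fresh auxiliary values to manufacture a guaranteed base of exactly $B$ join tuples (a short chain of rows running through all four tables on auxiliary values whose multiplicities multiply to $B$; this is the analogue of the type-$0$ rows of \thmref{thm:2-table}). Set $\alpha=\tfrac1{8\delta}$, pick $\beta$ just below the threshold of \lemref{lem:set} (so $\beta$ may be taken arbitrarily close to $C(\delta)$), let $t$ be the quantity in the statement, put $k=8\delta t$, and obtain from \lemref{lem:set} a $(k,\alpha,\beta)$-set $\mathcal{S}$ of subsets of a universe $T=\{1,\dots,t\}$ of $B$-values. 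For each \emph{type} $v\in T$ I would build $R_2^{(v)}$ whose rows with $A=\star$ carry the $B$-value $v$ (with some multiplicity $s\ge1$), and whose rows with $A\neq\star$ carry $B$-values chosen so that the marginal of $R_2^{(v)}$ on column $B$ --- an equal multiplicity spread over all of $T$ plus filler on junk values --- does \emph{not} depend on $v$ (the $A\neq\star$ rows ``absorb'' the choice). Symmetrically, for each $S\in\mathcal{S}$ I would build $R_3^{(S)}$ whose rows with $C=\star'$ carry, in equal blocks, the $k$ values of $S$ on column $B$, and whose rows with $C\neq\star'$ fill column $B$ up to a marginal independent of $S$. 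One then checks that throughout the family $\{(R_1,R_2^{(v)},R_3^{(S)},R_4)\}$ all four relations present the same frequency vector on every joining column, and that the join size equals the base $B$ when $v\notin S$ and is at least $(1+\epsilon)^2 B$ when $v\in S$, the latter precisely when $k$ is small enough that the extra matching contributed by the type $v$ exceeds $(\epsilon^2+2\epsilon)B$. Fitting everything inside tables of size $n$ while keeping the marginals invariant and $k=8\delta t$ as large as possible is what forces the hypothesis $B\le\tfrac{n^2}{\epsilon^2+2\epsilon}$ and the value $t=n-\tfrac{\sqrt B}{\,n-\sqrt{(\epsilon^2+2\epsilon)B}\,}$.

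The rest is the argument of \thmref{thm:2-table}, with $R_2^{(v)}$ in the role of the type-carrying table (the distribution $d_1$ there) and $R_3^{(S)}$ in the role of the set-carrying table ($d_2$ there). Take the input distribution that draws $v$ uniformly from $T$ and $S$ uniformly from $\mathcal{S}$, and partition the relations $R_3^{(S)}$ into classes by the value of $\Phi(R_3^{(S)})$. If $S_i,S_j$ lie in the same class then --- since $\Phi$ inspects $R_3$ alone and $R_1,R_4$ are fixed --- the decoder $D$ returns the same estimate on $(v,S_i)$ and $(v,S_j)$ for every $v$, hence errs on at least one of them whenever $v\in S_i\oplus S_j$; and since $|S_i\setminus S_j|\ge\tfrac k2$ and $|S_j\setminus S_i|\ge\tfrac k2$, for one of the two sets the estimate is off by more than a $1+\epsilon$ factor for at least $\tfrac{k/2}{t}=4\delta$ of the values of $v$. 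Thus each class contains at most one ``good'' set, so if $|\Phi(R_3)|\le\beta k-1$ then at most $\tfrac12|\mathcal{S}|$ sets are good and the error probability is at least $\tfrac12\cdot4\delta=2\delta$; hence $|\Phi(R_3)|\ge\beta k=C(\delta)\,k=C(\delta)\cdot8\delta\,t=C'(\delta)\,t$, and \thmref{thm:yao} gives the bound $\tfrac{C'(\delta)}{2}\,t$.

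The genuine difficulty is expected to lie entirely in the construction of the two middle tables, not in the probabilistic step (which is essentially that of \thmref{thm:2-table}). One must reconcile three competing demands: keeping every marginal frequency vector \emph{literally constant} across the family (without this, ``all frequency vectors given'' trivializes the problem, as it does for two tables); keeping the extra join contribution of a matching type above $(\epsilon^2+2\epsilon)B$; and keeping $\mathcal{S}$ large, i.e.\ $k=8\delta t$ with $t$ as in the statement, inside tables of size $n$. Balancing these fixes the multiplicity $s$, the block size inside $R_3$, and the sizes of the broadcast and auxiliary parts, and it is this balancing --- in particular the fact that the broadcast parts of $R_1$ and $R_4$ must shrink to size about $n-\sqrt{(\epsilon^2+2\epsilon)B}$ to leave room for the base chain --- that produces the somewhat delicate quantity $n-\tfrac{\sqrt B}{\,n-\sqrt{(\epsilon^2+2\epsilon)B}\,}$.
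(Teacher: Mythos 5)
Your proposal follows essentially the same route as the paper's proof: the paper fixes all marginal frequency vectors, hides a random type $i^*$ in the joint distribution of $R_2$ (the position of the single value $a$ on $C_{12}$ against the identity column $C_{23}$) and a random set $S\in\mathcal{S}$ in the joint distribution of $R_3$ (which of the $n-y$ types carry the value $d$ on $C_{34}$), so that the join size is $B$ when $i^*\notin S$ and $(1+\epsilon)^2B$ when $i^*\in S$, and then runs the class-counting argument and Yao's principle exactly as you describe. The one slip is in your final balancing remark: in the paper it is the \emph{auxiliary} (type-$0$) blocks of $R_1$ and $R_4$ that have size $x=n-\sqrt{(\epsilon^2+2\epsilon)B}$ while the broadcast blocks shrink to $\sqrt{(\epsilon^2+2\epsilon)B}$ (so that the signal is exactly $(n-x)^2=(\epsilon^2+2\epsilon)B$ and the base $x^2y^2=B$ forces $y=\sqrt{B}/x$), not the other way around --- but this does not change the architecture or the resulting bound.
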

\begin{remark}
The theorem can be easily generalized with more than four tables involved and different tables having different sizes.
\end{remark}
\begin{proof}

Suppose $R_1$ and $R_2$ join on column $C_{12}$, $R_2$ and $R_3$ join on column $C_{23}$, and $R_3$ and $R_4$ join on column $C_{34}$.
We use $R.C(v)$ to denote the number of entries of value $v$ on the column $C$ of relation $R$.
Let $k = 8\delta(n-y)$, $x = n - \sqrt{(\epsilon^2 + 2\epsilon) B}$ and $y = \frac{\sqrt{B}}{n - \sqrt{(\epsilon^2 + 2\epsilon) B}}$. Then define $t = n-y = \frac{k}{8\delta}, \alpha, \beta$, and construct $T, \mathcal{S}$ similarly as in \thmref{thm:2-table}.

\begin{table*}[t]
\begin{center}
\caption{An example of random relations from the input distribution where the join size is $B$.}
\label{tab:four-table}
\begin{tabular}{@{}c@{\hskip 0.2in}c@{\hskip 0.2in}c@{}}
\toprule
$\cdots$&$\cdots$& $R_1.C_{12}$\\
\midrule
&&$a$\\
&&$a$\\
&&$a$\\
&&$\vdots$\\
&&$a$\\
&&$0$\\
&&$\vdots$\\
&&$0$\\
\bottomrule
\end{tabular}
\qquad
\begin{tabular}{@{}c@{\hskip 0.2in}c@{\hskip 0.2in}c@{}}
\toprule
$R_2.C_{12}$&$\cdots$&$R_2.C_{23}$\\
\midrule
$a$&&$1$\\
$b$&&$2$\\
$b$&&$3$\\
\vdots&&$\vdots$\\
$b$&&$n-y$\\
$0$&&$0$\\
\vdots&&$\vdots$\\
$0$&&$0$\\
\bottomrule
\end{tabular}
\qquad
\begin{tabular}{@{}c@{\hskip 0.2in}c@{\hskip 0.2in}c@{}}
\toprule
$R_3.C_{23}$&$\cdots$&$R_3.C_{34}$\\
\midrule
$1$&&$c$\\
$2$&&$d$\\
$3$&&$c$\\
\vdots&&$\vdots$\\
$n-y$&&$c$\\
$0$&&$0$\\
\vdots&&$\vdots$\\
$0$&&$0$\\
\bottomrule
\end{tabular}
\qquad
\begin{tabular}{@{}c@{\hskip 0.2in}c@{\hskip 0.2in}c@{}}
\toprule
$R_4.C_{34}$&$\cdots$&$\cdots$\\
\midrule
$d$&&\\
$d$&&\\
$d$&&\\
\vdots&&\\
$d$&&\\
$0$&&\\
\vdots&&\\
$0$&&\\
\bottomrule
\end{tabular}
\end{center}
\end{table*}

First consider the following frequency vectors on the joining columns:
\begin{itemize}
\item
$R_1.C_{12}(a) = n - x,\ R_1.C_{12}(0) = x.$
\item
$R_2.C_{12}(a) = 1,\ R_2.C_{12}(b) = n-x-1,\ R_2.C_{12}(0) = y;\\ R_2.C_{23}(i) =1 (1 \le i \le n-y),\ R_2.C_{23}(0) = y.$
\item
$R_3.C_{23}(i) =1 (1 \le i \le n-y),\ R_3.C_{23}(0) = y;\\ R_3.C_{34}(c) = n-y-k,\ R_3.C_{34}(d) = k,\ R_3.C_{34}(0) = y.$
\item
$R_4.C_{34}(d) = n-x,\ R_4.C_{34}(0) = x.$
\end{itemize}

Then consider the following distributions on the relations which agree with the frequency vectors defined above:
\begin{description}
\item[$R_1.C_{12}$]
The first $n-x$ entries are filled with $a$'s. The remaining $x$ entries are filled with $0$'s.

\item[$R_2.C_{12}$]
The first $n-y$ entries are filled with $b$'s except for one $a$. The position of $a$ is uniformly randomly chosen among the $n-y$ possible positions. The remaining $y$ entries are filled with $0$'s.

\item[$R_2.C_{23}$]
The first $n-y$ entries are filled with $\{1, 2, \dots, n-y\}$. The remaining $y$ entries are filled with $0$'s.

\item[$R_3.C_{23}$]
The first $n-y$ entries are filled with $\{1, 2, \dots, n-y\}$. The remaining $y$ entries are filled with $0$'s.

\item[$R_3.C_{34}$]
The first $n-y$ entries are filled with $n-y-k$ of $c$'s and $k$ of $d$'s. The positions of $d$'s are chosen according to a uniformly randomly chosen subset $S \in \mathcal{S}$. The remaining $y$ entries are filled with $0$'s.

\item[$R_4.C_{34}$]
The first $n-x$ entries are filled with $d$'s. The remaining $x$ entries are filled with $0$'s.
\end{description}

See \tabref{tab:four-table} for an example of the random relations chosen from the input distributions constructed above. In this specific case, the join size is $B$
The rest of the proof is similar to that in \thmref{thm:2-table}.
\end{proof}

\bigskip

\section{Comparison between upper bounds and lower bounds}\label{sec:comparison}
In this section we compare the upper bounds and lower bounds on the number of samples needed to give an $(\epsilon, \delta)$-approximation of the cardinality (or the result of \texttt{COUNT} queries) of join operations among $p$ tables of size $n$, where $\epsilon$, $\delta$, and $p \ge 2$ are constant numbers.
In particular, we extended the result in \cite{Alon:1999:TJS:303976.303978} which studied two-table case.

We still assume an \emph{a priori} lower bound $B$ on the join size.
From the lower bounds in \thmref{thm:multi-table} we know that when $B = o(n^{p-1})$,
the amount of information any approximation scheme needs is linear in $n$, the size of the whole database.
Therefore, we assume $B = \Omega(n^{p-1})$.
The upper bound is shown in \propref{prop:bernoulli} by analyzing the simple \emph{Bernoulli sampling} using Chebyshev's inequality.
An overview of the results can be found in \tabref{tab:comparison}.
The upper bound and lower bound match each other (within a constant factor) when $B = \Theta(n^{k-1})$.

\begin{table*}[h!]\centering
\ra{1.3}
\caption{Amount of information needed to obtain a good estimate}\label{tab:comparison}
\begin{tabular}{@{}l@{\hskip 0.4in}c@{\hskip 0.4in}c@{\hskip 0.4in}c@{}}
\toprule
& 2-table~\cite{Alon:1999:TJS:303976.303978} & 3-table & p-table\\
\midrule
Bernoulli sampling &  $O\left(\frac{n^2}{B}\right)$ & $O\left(\frac{n^2}{\sqrt{B}}\right)$ & $O\left(\frac{n^2}{B^{1/(p-1)}}\right)$ \\
Lower bound &  $\Omega\left(\frac{\left(n - \sqrt{B}\right)^2}{B}\right)$ & $\Omega\left(\frac{\left(n - B^{1/3}\right)^3}{B}\right)$ & $\Omega\left(\frac{\left(n - B^{1/p}\right)^p}{B}\right)$ \\ \bottomrule
\end{tabular}
\end{table*}

Bernoulli sampling randomly selects each tuple in database relations with a uniform probability $q$, estimates the join size of $p$ relations by computing the join size of their samples and scaling the result by $q^{-p}$.

\begin{proposition}\label{prop:bernoulli}
Given $p$ relations $R_1, R_2, \dots, R_p$ of size $n$, suppose there is an a priori lower bound $B = \Omega(n^{p-1})$ on the join size of $\Join_i R_i$. The Bernoulli sampling scheme approximates the join size with constant relative error and constant error probability if the random sample has size $\frac{cn^2}{B^{1/(p-1)}}$, where $c$ is a constant determined by the desired accuracy and confidence.
\end{proposition}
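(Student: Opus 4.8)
\section*{Proof proposal for \propref{prop:bernoulli}}

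The plan is to equip Bernoulli sampling with the natural Horvitz--Thompson estimator, bound its variance, and then apply Chebyshev's inequality. Let $J \ge B$ denote the true join size, and keep each tuple of each $R_i$ independently with probability $q := cn/B^{1/(p-1)}$, so that each relation is subsampled to expected size $qn = cn^2/B^{1/(p-1)}$ (whenever this quantity exceeds $n$ we simply keep all of $R_i$ and report the exact answer, so we may assume $q \le 1$). Index the join result by tuples $r = (r_1,\dots,r_p)$ with $r_i \in R_i$, let $\mathcal{J}$ be the set of the $J$ such tuples satisfying the join predicate, and let $Y_r$ be the indicator that every component $r_i$ is sampled. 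The estimator is $X := q^{-p}\sum_{r \in \mathcal{J}} Y_r$; since the components of any $r$ lie in distinct relations, $\Pr[Y_r = 1] = q^p$, so $X$ is unbiased with $\mathrm{E}[X] = J$.

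The crux is the variance. First I would observe that for $r, r' \in \mathcal{J}$ sharing exactly $j$ components (i.e.\ $r_i = r'_i$ for exactly $j$ indices $i$), the event $Y_r = Y_{r'} = 1$ involves $2p - j$ distinct tuples, whence $\mathrm{Cov}[Y_r, Y_{r'}] = q^{2p-j} - q^{2p}$, which is $0$ when $j = 0$ and at most $q^{2p-j}$ when $j \ge 1$. Writing $N_j$ for the number of ordered pairs in $\mathcal{J}^2$ sharing exactly $j$ components, this yields $\mathrm{Var}[X] \le \sum_{j=1}^{p} q^{-j} N_j$. The combinatorial ingredient is that fixing the tuples in any $j$ of the $p$ relations leaves at most $n^{p-j}$ completions to a join tuple, so $N_j \le \binom{p}{j} J\, n^{p-j}$. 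Substituting $q = cn/B^{1/(p-1)}$ gives $q^{-j} N_j \le \binom{p}{j} c^{-j} B^{j/(p-1)} n^{p-2j} J$; dividing by $J^2$, using $B \le J$, and then using the hypothesis $B = \Omega(n^{p-1})$ for the terms $j \le p-1$ together with the trivial bound $B \le n^p$ for the term $j = p$, every term is at most $C_p\, c^{-1}$ for a constant $C_p$ depending only on $p$ and on the constant hidden in $B = \Omega(n^{p-1})$. Hence $\mathrm{Var}[X] \le C'_p\, c^{-1} J^2$.

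Finally, Chebyshev's inequality gives $\Pr\!\left[|X - J| \ge \epsilon' J\right] \le C'_p/(c\,\epsilon'^2)$; choosing $\epsilon' = \epsilon/(1+\epsilon)$ and then $c$ a large enough constant (depending on $\epsilon$ and $\delta$) makes the right-hand side at most $\delta$. On the complementary event we have $(1-\epsilon')J \le X \le (1+\epsilon')J$, which implies $\tfrac{1}{1+\epsilon}J \le X \le (1+\epsilon)J$; thus $X$ is an $(\epsilon,\delta)$-approximation, obtained from a total sample of size $p \cdot cn^2/B^{1/(p-1)} = O(n^2/B^{1/(p-1)})$.

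I expect the main obstacle to be the bookkeeping in the variance bound: after expanding into the cross terms $q^{-j}N_j$ one must see that the dominant one is the minimal-overlap term $j = 1$, and that the exponent count $n^{p-2j} B^{j/(p-1)}$ is, term by term, no larger than a constant times $J$ precisely because $B = \Omega(n^{p-1})$ --- this is exactly where the threshold $B = \Omega(n^{p-1})$ is used, matching the regime boundary noted just before the proposition. The remaining pieces (unbiasedness, the count $N_j \le \binom{p}{j} J n^{p-j}$, and the Chebyshev step) are routine.
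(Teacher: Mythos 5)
Your proposal is correct and takes essentially the same route as the paper: an unbiased scaled count of sampled join tuples, a second-moment bound, and Chebyshev's inequality, with the hypothesis $B=\Omega(n^{p-1})$ entering exactly where you predicted. The only difference is bookkeeping in the variance bound --- the paper counts covarying pairs via vertex degrees of the join hypergraph, using $\sum_i d_i(d_i-1)q^{p+1}$ with $d_i\le n^{p-1}$ and the crude bound $\operatorname{Cov}\le q^{p+1}$, whereas you stratify pairs by overlap size $j$ and use $q^{2p-j}-q^{2p}$; both rest on the same combinatorial fact (at most $n^{p-j}$ completions once $j$ components are fixed) and give matching conclusions.
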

\begin{proof}
We view the relations as a hypergraph $G = (V, E)$ where each tuple in a relation is a vertex.
Edges of this hypergraph correspond to the join result.
In other words, $|E| = |\Join_i R_i|$.
We note that $G$ is a $p$-uniform hypergraph, i.e., each edge contains exactly $p$ vertices.

We assume Bernoulli sampling selects each vertex with probability $q \ge \frac{1}{pn}$.
For each edge $e$, denote by $X_e$ to be the indicator variable such that $X_e = 1$ if all the vertices of $e$ are selected by Bernoulli sampling.
Define $X = \sum_{e \in E}X_e$.
Then $X$ is a random variable and its expectation
\[
\expct{X} = \sum_{e \in E} \expct{X_e} = |E| q^{p}
\]
since each edge is selected with probability $q^{p}$.
To compute the the variance $\var{X}$ we use the formula
\[
\var{\sum_{e \in E} X_e} = \sum_{e \in E} \var{X_e} + \sum_{e \neq e'} \cov{X_e, X_{e'}}.
\]
In a hypergraph, the degree of a vertex $v$ is the number of edges containing $v$.
Denote the degree of the $i$th vertex by $d_i$.
The covariance between any $X_e$ and $X_{e'}$ with $e \neq e'$ is at most $q^{p+1}$, and it is nonzero if and only if $e$ and $e'$ share some vertices.
Therefore, we have
\[
\var{\sum_{e \in E} X_e} \le |E| \left( q^p - q^{2p} \right) + \sum_{i=1}^{pn} d_i(d_i - 1) q^{p+1}.
\]
It is not hard to see that in the $p$-uniform hypergraph $G$,  $\sum_{i=1}^{pn} d_i = p |E|$.
Since $d_i \le n^{p-1}$, we have $\sum_{i=1}^{pn} d_i(d_i - 1) \le n^{p-1} \sum_{i=1}^{pn} d_i = pn^{p-1} |E|$. Recall that $q \ge \frac{1}{pn}$. Therefore,
\[
\var{\sum_{e \in E} X_e} \le p|E| nq^{p+1} + p |E| n^{p-1} q^{p+1} \le 2p|E|n^{p-1}q^{p+1}.
\]

Note that when $\expct{X}^2 \ge \alpha \var{X}$ for a constant $\alpha$, we can apply the Chebyshev's inequality to obtain a constant relative error with constant error probability.
This happens when
\[
q^{p-1} \ge \frac{2p n^{p-1}}{|E|},
\]
or in other words, the number of samples
\[
pnq = \Omega\left(\frac{n^2}{|E|^{1/(p-1)}}\right) = \Omega\left(\frac{n^2}{B^{1/(p-1)}}\right).
\]
\end{proof}

\section*{Acknowledgement}
This work benefitted from discussions with Surajit Chaudhuri and Vivek Narasayya.


\bibliography{reference}{}

\newcommand{\etalchar}[1]{$^{#1}$}
\begin{thebibliography}{DWN{\etalchar{+}}19}

\bibitem[AGMS99]{Alon:1999:TJS:303976.303978}
Noga Alon, Phillip~B. Gibbons, Yossi Matias, and Mario Szegedy.
\newblock Tracking join and self-join sizes in limited storage.
\newblock In {\em Proceedings of the Eighteenth ACM SIGMOD-SIGACT-SIGART
  Symposium on Principles of Database Systems}, PODS '99, 1999.

\bibitem[AGP99]{10.5555/645925.671347}
Swarup Acharya, Phillip~B. Gibbons, and Viswanath Poosala.
\newblock Aqua: A fast decision support systems using approximate query
  answers.
\newblock In {\em Proceedings of the 25th International Conference on Very
  Large Data Bases}, VLDB '99, 1999.

\bibitem[AGPR99]{10.1145/304182.304207}
Swarup Acharya, Phillip~B. Gibbons, Viswanath Poosala, and Sridhar Ramaswamy.
\newblock Join synopses for approximate query answering.
\newblock In {\em Proceedings of the 1999 ACM SIGMOD International Conference
  on Management of Data}, SIGMOD '99, 1999.

\bibitem[AMP{\etalchar{+}}13]{10.1145/2465351.2465355}
Sameer Agarwal, Barzan Mozafari, Aurojit Panda, Henry Milner, Samuel Madden,
  and Ion Stoica.
\newblock Blinkdb: Queries with bounded errors and bounded response times on
  very large data.
\newblock In {\em Proceedings of the 8th ACM European Conference on Computer
  Systems}, EuroSys '13, 2013.

\bibitem[CCMN00]{10.1145/335168.335230}
Moses Charikar, Surajit Chaudhuri, Rajeev Motwani, and Vivek Narasayya.
\newblock Towards estimation error guarantees for distinct values.
\newblock In {\em Proceedings of the Nineteenth ACM SIGMOD-SIGACT-SIGART
  Symposium on Principles of Database Systems}, PODS '00, 2000.

\bibitem[CDK17]{Chaudhuri2017AQP}
Surajit Chaudhuri, Bolin Ding, and Srikanth Kandula.
\newblock Approximate query processing: No silver bullet.
\newblock In {\em Proceedings of the 2017 ACM International Conference on
  Management of Data}, SIGMOD '17, 2017.

\bibitem[CDN07]{10.1145/1242524.1242526}
Surajit Chaudhuri, Gautam Das, and Vivek Narasayya.
\newblock Optimized stratified sampling for approximate query processing.
\newblock {\em ACM Trans. Database Syst.}, 32(2):9, June 2007.

\bibitem[CDS04]{10.1145/1007568.1007602}
Surajit Chaudhuri, Gautam Das, and Utkarsh Srivastava.
\newblock Effective use of block-level sampling in statistics estimation.
\newblock In {\em Proceedings of the 2004 ACM SIGMOD International Conference
  on Management of Data}, SIGMOD '04, 2004.

\bibitem[CG05]{10.5555/1083592.1083598}
Graham Cormode and Minos Garofalakis.
\newblock Sketching streams through the net: Distributed approximate query
  tracking.
\newblock In {\em Proceedings of the 31st International Conference on Very
  Large Data Bases}, VLDB '05, 2005.

\bibitem[CM05]{CORMODE200558}
Graham Cormode and S.~Muthukrishnan.
\newblock An improved data stream summary: the count-min sketch and its
  applications.
\newblock {\em Journal of Algorithms}, 55(1):58 -- 75, 2005.

\bibitem[CMN99]{Chaudhuri:1999:RSO:304182.304206}
Surajit Chaudhuri, Rajeev Motwani, and Vivek Narasayya.
\newblock On random sampling over joins.
\newblock In {\em Proceedings of the 1999 ACM SIGMOD International Conference
  on Management of Data}, SIGMOD '99, 1999.

\bibitem[CY17]{Chen:2017:TSJ:3035918.3035921}
Yu~Chen and Ke~Yi.
\newblock Two-level sampling for join size estimation.
\newblock In {\em Proceedings of the 2017 ACM International Conference on
  Management of Data}, SIGMOD '17, 2017.

\bibitem[Das]{binomial_coefficients}
Shagnik Das.
\newblock A brief note on estimates of binomial coefficients.
\newblock http://page.mi.fu-berlin.de/shagnik/notes/binomials.pdf.

\bibitem[DHC{\etalchar{+}}16]{Ding:2016:SSA:2882903.2915249}
Bolin Ding, Silu Huang, Surajit Chaudhuri, Kaushik Chakrabarti, and Chi Wang.
\newblock Sample + seek: Approximating aggregates with distribution precision
  guarantee.
\newblock In {\em Proceedings of the 2016 International Conference on
  Management of Data}, SIGMOD '16, 2016.

\bibitem[DWN{\etalchar{+}}19]{dutt2019selectivity}
Anshuman Dutt, Chi Wang, Azade Nazi, Srikanth Kandula, Vivek Narasayya, and
  Surajit Chaudhuri.
\newblock Selectivity estimation for range predicates using lightweight models.
\newblock {\em Proceedings of the VLDB Endowment}, 12(9):1044--1057, 2019.

\bibitem[EN06]{Estan:2006:ESJ:1129754.1129881}
Cristian Estan and Jeffrey~F. Naughton.
\newblock End-biased samples for join cardinality estimation.
\newblock In {\em Proceedings of the 22Nd International Conference on Data
  Engineering}, ICDE '06, 2006.

\bibitem[GGMS96]{10.1145/233269.233340}
Sumit Ganguly, Phillip~B. Gibbons, Yossi Matias, and Avi Silberschatz.
\newblock Bifocal sampling for skew-resistant join size estimation.
\newblock In {\em Proceedings of the 1996 ACM SIGMOD International Conference
  on Management of Data}, SIGMOD '96, 1996.

\bibitem[GLR00]{10.5555/645926.672017}
Venkatesh Ganti, Mong-Li Lee, and Raghu Ramakrishnan.
\newblock Icicles: Self-tuning samples for approximate query answering.
\newblock In {\em Proceedings of the 26th International Conference on Very
  Large Data Bases}, VLDB '00, 2000.

\bibitem[HH99]{10.1145/304182.304208}
Peter~J. Haas and Joseph~M. Hellerstein.
\newblock Ripple joins for online aggregation.
\newblock In {\em Proceedings of the 1999 ACM SIGMOD International Conference
  on Management of Data}, SIGMOD '99, 1999.

\bibitem[HK04]{10.1145/1007568.1007601}
Peter~J. Haas and Christian K\"{o}nig.
\newblock A bi-level bernoulli scheme for database sampling.
\newblock In {\em Proceedings of the 2004 ACM SIGMOD International Conference
  on Management of Data}, SIGMOD '04, 2004.

\bibitem[HNSS96]{10.1006/jcss.1996.0041}
Peter~J. Haas, Jeffrey~F. Naughton, S.~Seshadri, and Arun~N. Swami.
\newblock Selectivity and cost estimation for joins based on random sampling.
\newblock {\em J. Comput. Syst. Sci.}, 52(3):550--569, June 1996.

\bibitem[HYPM19]{10.14778/3372716.3372726}
Dawei Huang, Dong~Young Yoon, Seth Pettie, and Barzan Mozafari.
\newblock Joins on samples: A theoretical guide for practitioners.
\newblock {\em Proc. VLDB Endow.}, 13(4), December 2019.

\bibitem[Ioa03]{10.5555/1315451.1315455}
Yannis Ioannidis.
\newblock The history of histograms (abridged).
\newblock In {\em Proceedings of the 29th International Conference on Very
  Large Data Bases - Volume 29}, VLDB '03, 2003.

\bibitem[KKR{\etalchar{+}}19]{DBLP:conf/cidr/KipfKRLBK19}
Andreas Kipf, Thomas Kipf, Bernhard Radke, Viktor Leis, Peter~A. Boncz, and
  Alfons Kemper.
\newblock Learned cardinalities: Estimating correlated joins with deep
  learning.
\newblock In {\em {CIDR} 2019, 9th Biennial Conference on Innovative Data
  Systems Research}, 2019.

\bibitem[LEHN02]{10.1145/564691.564721}
Gang Luo, Curt~J. Ellmann, Peter~J. Haas, and Jeffrey~F. Naughton.
\newblock A scalable hash ripple join algorithm.
\newblock In {\em Proceedings of the 2002 ACM SIGMOD International Conference
  on Management of Data}, SIGMOD '02, 2002.

\bibitem[LN95]{LIPTON199518}
R.J. Lipton and J.F. Naughton.
\newblock Query size estimation by adaptive sampling.
\newblock {\em Journal of Computer and System Sciences}, 51(1):18 -- 25, 1995.

\bibitem[LNS90]{10.1145/93605.93611}
Richard~J. Lipton, Jeffrey~F. Naughton, and Donovan~A. Schneider.
\newblock Practical selectivity estimation through adaptive sampling.
\newblock {\em SIGMOD Rec.}, 19(2):1--11, May 1990.

\bibitem[LWYZ16]{10.1145/2882903.2915235}
Feifei Li, Bin Wu, Ke~Yi, and Zhuoyue Zhao.
\newblock Wander join: Online aggregation via random walks.
\newblock In {\em Proceedings of the 2016 International Conference on
  Management of Data}, SIGMOD '16, 2016.

\bibitem[Olk93]{Olken93randomsampling}
Frank Olken.
\newblock {\em Random Sampling from Databases}.
\newblock PhD thesis, University of California at Berkeley, 1993.

\bibitem[PSC84]{10.1145/971697.602294}
Gregory Piatetsky-Shapiro and Charles Connell.
\newblock Accurate estimation of the number of tuples satisfying a condition.
\newblock {\em SIGMOD Rec.}, 14(2):256--276, June 1984.

\bibitem[PSW14]{10.1145/2594538.2594554}
Rasmus Pagh, Morten St\"{o}ckel, and David~P. Woodruff.
\newblock Is min-wise hashing optimal for summarizing set intersection?
\newblock In {\em Proceedings of the 33rd ACM SIGMOD-SIGACT-SIGART Symposium on
  Principles of Database Systems}, PODS '14, 2014.

\bibitem[RD08]{10.1145/1386118.1386121}
Florin Rusu and Alin Dobra.
\newblock Sketches for size of join estimation.
\newblock {\em ACM Trans. Database Syst.}, 33(3), September 2008.

\bibitem[TZ04]{10.5555/982792.982884}
Mikkel Thorup and Yin Zhang.
\newblock Tabulation based 4-universal hashing with applications to second
  moment estimation.
\newblock In {\em Proceedings of the Fifteenth Annual ACM-SIAM Symposium on
  Discrete Algorithms}, SODA '04, 2004.

\bibitem[VMZC15]{Vengerov:2015:JSE:2824032.2824051}
David Vengerov, Andre~Cavalheiro Menck, Mohamed Zait, and Sunil~P. Chakkappen.
\newblock Join size estimation subject to filter conditions.
\newblock {\em Proc. VLDB Endow.}, 8(12):1530--1541, August 2015.

\bibitem[{Yao}77]{4567946}
A.~C. {Yao}.
\newblock Probabilistic computations: Toward a unified measure of complexity.
\newblock In {\em 18th Annual Symposium on Foundations of Computer Science
  (SFCS 1977)}, Oct 1977.

\bibitem[ZCL{\etalchar{+}}18]{Zhao:2018:RSO:3183713.3183739}
Zhuoyue Zhao, Robert Christensen, Feifei Li, Xiao Hu, and Ke~Yi.
\newblock Random sampling over joins revisited.
\newblock In {\em Proceedings of the 2018 International Conference on
  Management of Data}, SIGMOD '18, 2018.

\end{thebibliography}
\bibliographystyle{alpha}

\end{document}